\newcommand{\bidmark}{\rm u}
\newcommand{\inmark}{\rm i}
\newcommand{\outmark}{\rm o}
\newcommand{\kin}{\ensuremath{{k^\text{(i)}}}}
\newcommand{\kout}{\ensuremath{{k^\text{(o)}}}}
\newcommand{\kbid}{\ensuremath{{k^\text{(u)}}}}
\newcommand{\kini}{\ensuremath{{k_i^\text{(i)}}}}
\newcommand{\kouti}{\ensuremath{{k_i^\text{(o)}}}}
\newcommand{\kbidi}{\ensuremath{{k_i^\text{(u)}}}}
\newcommand{\kvec}{\ensuremath{\mathbf{k}}}
\newcommand{\xvec}{\ensuremath{\mathbf{x}}}
\newcommand{\zeroone}{\ensuremath{ \{0,1\} } }
\newcommand{\onthresholdi}{\phi_{{\rm on}, i}}
\newcommand{\offthresholdi}{\phi_{{\rm off}, i}}
\newcommand{\onthreshold}{\phi_{{\rm on}}}
\newcommand{\offthreshold}{\phi_{{\rm off}}}
\newcommand{\avgdeg}{k_{\rm avg}}
\numberwithin{equation}{section}
\newtheorem{mythm}{Lemma}
\begin{document}

\pagestyle{empty}

\begin{titlepage}
  \begin{center}
    \vspace*{1.5in}
    \Large{ON-OFF THRESHOLD MODELS OF\\
      SOCIAL CONTAGION}
    
    \normalsize{
      \vspace{9em}
      A Thesis Presented\\[1em]
      by\\[1em]
      Kameron Decker Harris\\[1em]
      to\\[1em]
      The Faculty of the Graduate College\\[1em]
      of\\[1em]
      The University of Vermont\\[1em]
      In Partial Fulfillment of the Requirements\\
      for the Degree of Master of Science\\
      Specializing in Mathematics
      \vspace{1em}

      October, 2012
    }
  \end{center}
  % \maketitle
\end{titlepage}

% \newpage
\noindent
Accepted by the Faculty of the Graduate College, 
The University of Vermont, in partial
fulfillment of the requirements for the degree of 
Master of Science, specializing in Mathematics.

\vspace{3em}

\noindent Thesis Examination Committee:

\vspace{6em}

\line(1,0){200} \hspace{1em} Advisor

Peter Sheridan Dodds, Ph.D.

\vspace{6em}

\line(1,0){200}

Christopher M. Danforth, Ph.D.

\vspace{6em}

\line(1,0){200} \hspace{1em} Chairperson

Joshua Bongard, Ph.D.

\vspace{6em}

\line(1,0){200} \hspace{1em} Dean, Graduate College

Domenico Grasso, Ph.D.

\vfill
August 15, 2012

%left-justify heading
%\renewcommand*\abstractname{\flushleft\textbf{Abstract}\hfill}
%\begin{abstract}
\pagestyle{empty}
\newpage
{\flushleft \Large\bf Abstract}
\vspace{1em}

\noindent
We study binary state contagion dynamics on a social network
where nodes act in response to the average state of their neighborhood.
We model the competing tendencies of imitation
and non-conformity by incorporating an off-threshold
into standard threshold models of behavior.
In this way, we attempt to capture important aspects of fashions and 
general societal trends.
Allowing varying amounts of stochasticity in both the network and
node responses, we find different outcomes in the random and
deterministic versions of the model. 
In the limit of a large, dense network, however,
we show that these dynamics coincide.
The dynamical behavior of the system ranges from 
steady state to chaotic depending on network connectivity and 
update synchronicity.
We construct a mean field theory for general random networks.
In the undirected case, the mean field theory predicts that the 
dynamics on the network are a smoothed version of the average
node response dynamics.
We compare our theory to extensive simulations on Poisson random graphs
with node responses that
average to the chaotic tent map.
%\end{abstract}

% \newpage
% \section*{Acknowledgments}
% \todo{acknowledgements here}

\newpage
\pagenumbering{roman}
\setcounter{page}{2}
\pagestyle{plain}
\renewcommand{\contentsname}{Table of Contents}
\tableofcontents

\newpage
\addcontentsline{toc}{section}{\listtablename} 
\listoftables

\newpage
\addcontentsline{toc}{section}{\listfigurename}
\listoffigures

\newpage
\pagenumbering{arabic}
\setcounter{page}{1}
\pagestyle{plain}
\begin{doublespace}
  \section{Introduction}

Almost universally, people enjoy
observing, speculating, and arguing about 
their fellow humans' behavior, including what they are wearing,
the music they listen to, 
and however else they express themselves.
Fashions,  trends, and fads intrigue us, 
whether past, present, or future.
Contemporary examples of prevalent trends
include skinny jeans, fixed-wheel
bicycles, and products made by Apple Inc. 
In the 1960's,
shag carpets and floral wallpaper would take their place.
Trends are also prevalent in language:
``hipster'' was popular in the 1940's and evolved into ``hippie'',
but it has been reappropriated today with different connotations; 
``groovy'' was once used seriously, and now comes out tongue-in-cheek.

The dynamics of these cultural phenomena are fascinating and complex.
They reflect numerous factors such as the political climate,
social norms, technology, marketing, and history. 
These phenomena are also influenced
by essentially random events. 
Usually, we can explain the emergence of a fad after the fact, 
but it is extremely difficult
to predict {\it a priori} whether some behavior will become popular.
Furthermore, those behaviors which are adopted by a large fraction of the 
population can lose their excitement and die out forever or later
recur unpredictably.

In this work we describe a mathematical model of the rise and
fall of trends. 
In particular, we model a social contagion process
where people are influenced by the behavior of their friends.
The agents in the model act according to simple competing tendencies
of imitation and non-conformity. One can argue that these
two ingredients are essential to all trends; indeed,
Simmel, in his classic essay ``Fashion'' \citeyearpar{simmel1957a},
believed that these are the essential forces 
behind the creation and destruction of fashions.

Our model is not meant to be quantitative, except
perhaps in carefully designed experiments,
but it captures the features with which we are familiar: 
some trends take off and some do not, and
some trends are stable while others vary wildly through time. 
Our model is closely related to the seminal work
of \citet{schelling1971a} and \citet{granovetter1978a}.
Being a mathematical model, it is also connected to 
theories of percolation \citep{stauffer1994a}, 
disease spreading \citep{newman2003a}, 
and magnetism \citep{newman2003a, aldana2003a}.

We focus on the derivation and analysis
of dynamical master equations that describe the expected evolution
of the system state. 
Through these equations we are able to 
predict the behavior of the social contagion process. 
In some cases, this can be done by hand, 
but most of the time we resort to
numerical methods for their iteration or solution.

This thesis is structured as follows.
In Section~\ref{sec:background}, we introduce the reader to
important background material relevant to the on-off threshold model.
In Section~\ref{sec:model}, we define the model and its
deterministic and stochastic variants. 
In Section~\ref{sec:fixedanalysis},
we provide an analysis of the model when the underlying network is fixed.
Section~\ref{sec:meanfield} develops a mean field theory
of the model in the most general kind of random graphs.
In Section~\ref{sec:tentmapf}, we consider the model
on Poisson random graphs with a specific kind of response function.
Our analysis is then applied to this specific case, and we compare
the results of simulations and theory. 
Finally, Section~\ref{sec:conclusions} presents conclusions
and directions for further research.

\section{Background}

\label{sec:background}

\subsection{Basics of graphs and networks}

When modeling any dynamical system of many interacting particles or
agents, we are often forced to start with a simplified description
of their interactions. In a solid, for example, atoms are situated
on some sort of lattice and assumed to interact only with
their nearest neighbors.
However, in many cases the interactions we aim to model do not
have the periodic structure of a lattice. 
Graphs, which are just a set of points connected by lines,
are a more general mathematical structure.
We denote a graph $\mathscr{G}$ as an ordered pair
$\mathscr{G} = (V,E)$ where $V$ is the set of
vertices (also called nodes)
and $E \subseteq V \times V$ the set of edges (also called links).
Here, $E$ is a set of ordered pairs, and we 
denote an edge from vertices $i$ to $j$ as $ij \in E$.
The terms {\it graph} and {\it network} will
be used interchangeably, but they have acquired different connotations. 
\citet[\S 1.2.1]{hackett2011a} puts this
rather nicely:
\begin{quote}
\begin{singlespace}
  To model a complex system as a graph is to filter out the
  functional details of each of its components, and the idiosyncrasies of
  their interactions with each other, and to focus instead on the underlying
  structure (topology) as an inert mathematical construct. Although this
  technique is central also to network theory, the word network, in contrast,
  usually carries with it connotations of the context in which the overarching
  system exists, particularly when that system displays any sort of nonlinear
  dynamics. For example, when investigating the spread of infectious disease
  on a human sexual contact network it makes sense to consider the relevant
  sociological parameters as well as the abstract topology, and it is in such
  settings that the interdisciplinary aspect that distinguishes network theory
  comes to the fore.
\end{singlespace}
\end{quote}

In models of human behavior, interactions can be considered to
occur on a social network.
Each person is connected to those people they interact with.
Interactions that constitute a connection between two people,
A and B, may be defined in many ways.
For example: 
\begin{enumerate}[\quad{Case} 1.]
\item A contacts B. This could mean:
  \begin{enumerate}
  \item \label{link_dir} A sends B an email.
  \item \label{link_bi} A attends the same concert as B.
  \end{enumerate}
\item A is B's superior in a hierarchy. \label{link_tree}
\item A and B both belong to the same group.
\end{enumerate}
These examples illustrate different types of edges that can be 
included in a graph. In Case~\ref{link_dir}, sending an email message 
is a one-way communication, so it
forms a {\it directed} edge. 
We represent directed edges with arrows
in a drawing of the corresponding graph. 
In Case~\ref{link_bi}, however, attending a concert is a
symmetric relation and there is no reason to give the edge direction.
Such edges are called {\it undirected} or bidirectional.
See Figure~\ref{fig:small-graph} for an example of a small
graph containing multiple types of edges.
A simple graph, by definition, contains no directed edges or self
loops (edges connecting a vertex to itself). 
It will be useful to represent the connectivity of the graph with
the {\it adjacency matrix}, 
$A=(A_{ij})$, where $A_{ij} = 1$
if and only if $ji \in E$ 
(this ``backwards'' definition is not standard, but it is
useful for the linear algebra to come).
The unfamiliar reader is directed to \citet{west2001a}
for a thorough introduction to graph theory.

\begin{figure}
  \centering
  \includegraphics[width=0.5\linewidth]{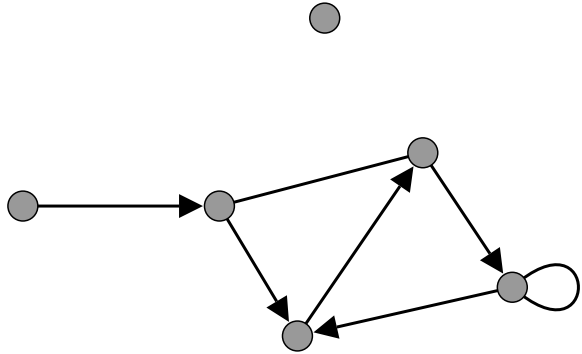}
  \caption[A graph consisting of
    a connected component of 5 nodes
    and one isolated node.]
  {A graph consisting of
    a connected component of 5 nodes
    and one isolated node. 
    The majority of the edges are directed.
    There is one undirected edge, shown without arrows.
    There is one self-loop which
    should be interpreted as directed.}
  \label{fig:small-graph}
\end{figure}

When we wish to analyze the structure of a given network,
one of the first things to examine is its degree distribution.
The {\it degree} $k_i$ of a vertex $i \in V$ is the number of 
edges incident to it, regardless of direction.
The sequence 
$k_1, k_2, \ldots, k_N$ 
is called the degree sequence. 
In simple graphs, this is just the size of a vertex's neighborhood.
In more complicated graphs we usually speak instead of a vertex $i$'s
in-degree $\kini$, out-degree $\kouti$,
and undirected degree $\kbidi$, all defined in the obvious way
\citep{dodds2011a}.
The {\it degree distribution} $p_k$
is a probability distribution
which tells us how edges are distributed among nodes.
The average degree $\avgdeg = \sum_{i=0}^{\infty} k p_k$ characterizes
the overall density of edges. A degree distribution which is
sharply peaked about its mean indicates a relatively homogeneous
network where vertices tend to have the same number of incident edges.
In contrast, a skewed distribution could result in some nodes having 
very high degree while the majority have low degree. This is common
in real networks, and the tail of the degree
distribution often follows an approximate
power law $p_k \sim k^{-\alpha}$ 
for some exponent $\alpha$ \citep{newman2003a}. 

\subsection{Random graphs}

Random graphs are a family of models used to represent networks
from the real world, although their suitability as such
is questionable for reasons that will be
elaborated below. 
Nevertheless, they are well-suited to analysis and 
as null models.

The simplest random graph is
the binomial model introduced by Erd\H{o}s and R\'enyi \citep{bollobas2001a}.
Define a probability space $\mathcal{G}(N, p)$ 
of graphs on $N$ vertices where any of the $\binom{N}{2}$ 
possible edges are chosen independently with probability $p$.
The expected average degree in the network will be $\avgdeg = p (N-1)$.
We will often make statements about this network in the 
limit of large system size $N \to \infty$. 
(In physics, this is referred to as the 
thermodynamic limit. 
Many expressions are simplified in performing this approximation,
and derived results can be quite accurate given a large enough,
albeit finite, $N$.)
The degree distribution is
a binomial distribution
$p_k = \binom{N-1}{k} p^k (1-p)^{N-1-k}$
which is well-approximated by the Poisson distribution 
$p_k \approx (\avgdeg)^k \exp({-\avgdeg})/k!$
with parameter $\avgdeg = Np$ when $N$ is large and $\avgdeg$ fixed.
For this reason, $\mathcal{G}(N,p)$ is also called the 
{\it Poisson random graph} model.
As noted previously, many networks observed in the real world
have heavy-tailed distributions, so
the Poisson model is not suitable for those types of networks.
Furthermore, real networks often contain a high density of triangles
or clustering --- friendship tends to be transitive. 
Poisson random graphs, on the other hand, 
have zero clustering in the thermodynamic limit.

A more flexible generalization
is the configuration model. 
In this model, either the degree sequence
\citep{molloy1995a, molloy1998a}
or the expected degree sequence
\citep{chung2002a} is given in advance \citep[also see][]{newman2003a};
the models produce similar networks with subtle differences.
The configuration model of 
can be thought of as a random wiring process as follows. 
First, draw a degree sequence $k_1, \ldots, k_N$
independently according to the desired degree distribution $p_k$.
We assign $k_i$ {\it stubs} (half-edges) to
each vertex $i \in V$. Next, choose a pair of stubs at random,
connect them, remove both from the queue of stubs, and continue 
until all stubs are connected. 
Finally, pairs of edges are chosen and their endpoints shuffled in
the manner of \citet{milo2002a} to ensure uniform
sampling from all graphs with the given degree sequence.
The resulting graph will have the imposed degree sequence, so long
as the sum of the degrees is even.
There may be self-loops or repeated edges, neither of which
are allowable for simple graphs. However, we expect these to occur
increasingly rarely for large $N$, and they can be removed without affecting
the resulting graph much.
Degree-degree correlations can be introduced by shuffling edges as
described in \citet{melnik2011a} and \citet{payne2011a}.
The configuration model also lacks triangles and higher-order
cliques in the thermodynamic limit, although some work has been done
to create random graph models with clustering 
\citep[see][\S\S4-5]{hackett2011a}.

\subsection{Dynamical processes on networks}

Networks provide a structure on top of which all kinds of
dynamical processes may take place.
In many cases, the structure of the network itself heavily influences
the dynamics. 
For instance, ferromagnetic materials can be modeled
as atoms in a lattice network, where the bulk magnetization
is the result of interactions between
individual atoms' magnetic moments; 
strikingly different
results occur in different dimensional lattices. 
Another example is a food web, a network of species connected
by trophic interactions (who eats whom). 
The populations of the species in the ecosystem
can be described by dynamical equations that reflect the structure of
the network, 
and this can affect ecosystem stability.
These are just two examples where dynamical processes on 
networks are a reasonable way to model system behavior.
Since networks are ubiquitous structures, these models 
appear across all disciplines \citep{vespignani2012a}.

\subsubsection{Random Boolean networks}

\label{sec:rbn}

Consider Boolean or binary state dynamics, where
each node can be either ``on'' or ``off''
(in various contexts this can mean
active/inactive, infected/susceptible, or spin up/spin down). 
The state of node $i$
is encoded by a variable $x_i \in \zeroone$, and the system
state is $\mathbf{x} = (x_i)$.
At each time step, nodes receive input from their neighbors
in the (undirected) network. 
They then compute a function of that input,
i.e.,\ 
$f_i(x_{j_1}, x_{j_2}, \ldots, x_{j_{k_i}})$ 
where
$j_1, \ldots, j_{k_i}$ 
are the neighbors of node $i$.
This determines their state in the next time step.
% For clarity of exposition in the rest of this thesis, 
% node subscripts will be left out unless necessary.
Because the state space $\zeroone^N$ is finite, all
trajectories are eventually periodic. 
The detailed structure of the cycles depends 
on the specific details of the network and the Boolean functions.

\newcommand{\prbn}{\tilde{p}}

The system described above, for general 
$f_i: \zeroone^{k_i} \to \zeroone$,
is known as a Boolean network. 
Boolean networks were first
studied by \citet{kauffman1969a} as a model for
dynamical behavior within cells.
See the review by \citet{aldana2003a}.
Most researchers have considered
the dynamics on random $K$-regular graphs with a 
parameter $\prbn$ that determines the bias between 0s and 1s
in the output of the update functions $f_i$, 
which are otherwise randomly chosen Boolean functions.

As mentioned, deterministic Boolean network models must be
eventually periodic. 
However, the behavior of the transient and the structure of the
basins of attraction are different for different parameters
$K$ and $\prbn$. In particular, there is a critical value
of the connectivity $K_c (\prbn)$ that separates the 
transient dynamics into two phases 
\citep{aldana2003a}:
\begin{enumerate}
\item Frozen, $K < K_c$: The distance between 
  nearby trajectories $\mathbf{x}(t)$ and $\mathbf{x}'(t)$ 
  decays exponentially with time.
\item Critical, $K = K_c$: The temporal evolution of distance between 
  trajectories is determined by fluctuations.
\item Chaotic, $K > K_c$: The distance between nearby trajectories 
  grows exponentially with time. 
\end{enumerate}
% Distance is the normalized 
% Hamming distance between $x$ and $x'$.
% This is just the minimal number of bit 
% flips required to make them equal divided by $N$.

% Note that in the rest of this thesis, $p$ will always refer to the probability
% of connection in the $\mathcal{G}(N, p)$ Poisson random graph model,
% and $K$ will always refer to the random variable for node degree.

\subsubsection{Social models}

When modeling social systems with a Boolean network,
the nodes represent people and their states encode whether
or not they participate in a behavior, possess a certain belief,
etc. This could be rioting or not rioting \citep{granovetter1978a},
buying a particular style of tie \citep{granovetter1986a}, 
liking a particular band or style of music, or
believing in some unintuitive or controversial idea, 
e.g.\ climate change.
The state can represent any behavior with only two
mutually exclusive possibilities.

The function $f_i$ that determines how node $i$ changes
state is called its {\it response function} in 
sociological contexts.
\citet{schelling1971a, schelling1973a} and \citet{granovetter1978a}
pioneered the use of threshold response functions in models
of collective social behavior 
(although they were not the first; see the citations in their papers).
This was based on the intuition that, 
for a person to adopt some new behavior, 
the fraction of the population exhibiting
that behavior might need to exceed some critical value, 
the person's threshold.
(Mathematically,
a threshold response function 
$f(\phi; \onthreshold)$ 
with threshold $\onthreshold$
returns 0 if $\phi < \onthreshold$ and 1 if 
$\phi \geq \onthreshold$\footnote{The edge case 
could be defined differently, but this will
not influence the dynamics except in carefully constructed
``pathological'' scenarios.}.)
These models were generalized to the case where the dynamics take place
on a network by \citet{watts2002a}.
If we initialize a social network with some fraction of active nodes,
some of their neighbors' thresholds may be exceeded and the 
activity can spread (depending on the distribution of thresholds
and network structure).  

Standard threshold models have simple dynamical behavior:
in a word, ``spreading.'' 
If a single activation, on average, leads to more than one
subsequent activation,
then the spreading will be successful.
The activity will increase in a sigmoid fashion until some final 
fraction of the network is activated. 
These kinds of spreading are often studied using branching processes.
See the book by \citet{harris1963a} for an overview of branching 
processes and the widely-used generating function formalism. 
Branching processes have been used to model, among other things,
extinction of families, species, and genes, neutron cascades 
(as happens during nuclear chain reactions), and high energy particle
showers caused by cosmic rays.
The generating function formalism developed in part by Newman
\citep[as in][]{newman2003a, watts2002a} to analyze spreading
on networks is a straightforward application of the classical 
theory of branching processes.

We note that threshold 
random Boolean network models have been studied
for the purpose of modeling
neural networks \citep{aldana2003a}. 
Those models take place on signed, weighted
graphs, which differ from the networks considered here, and
the problems considered are different.
It would be interesting to explore the connections between our 
on-off threshold model (Section~\ref{sec:model}) 
and other threshold Boolean network models.

\subsection{Some notation}

The Bachmann-Landau asymptotic notations are used throughout this thesis.
When used carefully, asymptotic notation greatly improves the
readability of analytic statements and proofs. 
It is also widely used in probability.
For an overview of the notation's history and usage,
see \cite{knuth1976a} and
the references therein.
The notations we have used here are 
[citing from \citet{knuth1976a}]:
\begin{itemize}
\item $O(f(n))$ is the set of all $g(n)$ such that there exist positive
  constants $C$ and $n_0$ with $|g(n)| \leq C f(n)$ for all $n \geq n_0$.
\item $\Omega(f(n))$ is the set of all $g(n)$ such that there
  exist positive constants $C$ and $n_0$ with 
  $g(n) \geq C f(n)$ for all $n \geq n_0$.
% \item $\Theta(f(n))$ is the set of all $g(n)$ such that there
%   exist positive constants $C,C'$ and $n_0$ with 
%   $C f(n) \leq g(n) \leq C' f(n)$ for all $n \geq n_0$.
\item $o(f(n))$ is the set of all $g(n)$ such that
  $g(n)/f(n) \to 0$ as $n \to \infty$.
\item $g(n) \sim f(n)$ if $g(n)/f(n) \to 1$ as $n \to \infty$.
\end{itemize}
Formally, each of the above define sets of functions, 
but we often use statements such as 
``$f$ is $O(g)$'' 
(read as ``$f$ is big-oh of $g$'') 
or ``$f = O(g)$''
to mean ``$f \in O(g)$.''

When expressing probabilities we will use notation
of the form $P(x)$, $P(x | y)$,
etc. Here, $P(x)$ is the probability that 
the random variable associated
with $x$ equals the specific value $x$. 
We leave out the random variables to avoid introducing unnecessary clutter.

Symbols in boldface represent vector quantities or 
vector-valued functions, e.g., $\mathbf{x} = (x_i)$.
Subscripts have been left out in places for clarity.
  \section{The on-off threshold model}

\label{sec:model}

Here we study a simple extension of the classical threshold models 
\citep[such as][among others]{schelling1971a, schelling1973a, 
  granovetter1978a, watts2002a,dodds2004a}: 
the response function also includes an off-threshold. 
See Figure~\ref{fig:onoffthresh} for an example on-off
threshold response function.
This is exactly the model of 
\citet{granovetter1986a}, but on a network.
We motivate this choice with the following
\citep[also see][]{granovetter1986a}.
(1) Imitation: the on state becomes favored as the fraction of active 
neighbors surpasses the on-threshold (bandwagon effect).
(2) Non-conformity: the on state is eventually less favorable 
with the fraction of active neighbors past the off-threshold
(reverse bandwagon, snob effect).
(3) Simplicity: in the absence of any raw data of ``actual''
response functions, which are surely highly context-dependent and 
variable, we choose arguably the simplest
deterministic functions which capture
imitation and non-conformity.

Let $\mathscr{G} = (V, E)$ be a graph with $N=|V|$. Assign 
each vertex $i \in V$ an on-threshold $\onthresholdi$ and an 
off-threshold 
$\offthresholdi$
with 
$0 \leq \onthresholdi \leq \offthresholdi \leq 1$.
Then that node's response function 
$f_i(\phi_i; \onthresholdi, \offthresholdi)$
is 1 if 
$\onthresholdi \leq \phi_i \leq \offthresholdi$ 
and 0 otherwise.
Let $\xvec(0) \in \zeroone^N$ be the initial states of all nodes.
At time step $t$,
each node $i$ computes the fraction 
$\phi_i(t)$ 
of their neighbors in 
$\mathscr{G}$
who are active and takes the state 
$x_i(t+1)
= f_i \left( \phi_i(t); \onthresholdi, \offthresholdi \right)$
at the next time step. 
The above defines a deterministic dynamical system
for a fixed graph and
fixed thresholds.

We now make some quick remarks about the on-off threshold model.
First, our model is a particular kind of 
Boolean network (Section~\ref{sec:rbn}).
Note that each node reacts only to the fraction of its neighbors
who are active, rather than the absolute number, and
the input varies from 0 to 1 in steps 
of $1/k_i$, where $k_i$ is node $i$'s degree.
Note that if $\onthresholdi = 0$ the node activates spontaneously, and if
$\offthresholdi = 1$ we have the usual kind of threshold response function 
(without an off-threshold).

A crucial difference between our model and many related
threshold models is that, in those models, 
an activated node can never reenter the susceptible state.
\citet{gleeson2007a} call this the permanently active property
and elaborate on its importance to their analysis.
Such models must eventually reach a steady state. 
When the dynamics are deterministic, this will
be a fixed point, and in the presence of stochasticity the steady
state is characterized by some fixed fraction of active nodes
subject to fluctuations. 
The introduction of the off-threshold builds in a mechanism for node
deactivation. 
Because nodes can now recurrently transition
between on and off states,
the deterministic dynamics can exhibit
a chaotic transient (see Section~\ref{sec:rbn}),
and the long time behavior
can be periodic with potentially high period.
With stochasticity, the dynamics can be truly chaotic and never repeat.

In the rest of this Section, we will describe how this
model is different from the random Boolean networks in the literature.
This is mainly due to the on-off threshold response functions we consider,
but also the type of random graph on which the dynamics
take place,
varying amounts of stochasticity which we introduce in the
networks and response functions, 
and the possibility of asynchronous updates.

\begin{figure}
  \centering
  \includegraphics[width=.5\linewidth]{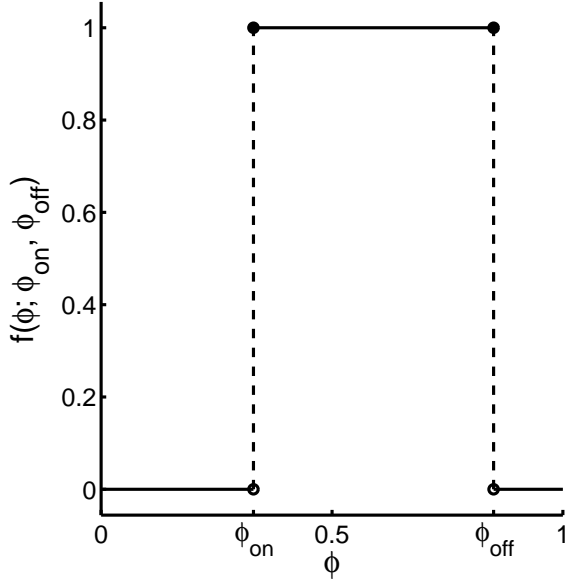}
  \caption[An example on-off threshold response function.]{
    \label{fig:onoffthresh}
    An example on-off threshold response function. 
    Here, 
    $\onthreshold = 0.33$ 
    and
    $\offthreshold = 0.85$. 
    The node will activate if 
    $\onthreshold \leq \phi \leq \offthreshold$, 
    where $\phi$
    is the fraction of its neighbors who are active. 
    Otherwise it turns off.
  }
\end{figure}

\subsection{The networks considered}

The mean field analysis in Section~\ref{sec:meanfield} 
is applicable to any network which can be characterized
by its degree distribution.
As mentioned before, the vast majority of the theory of
random Boolean networks assumes a regular random graph.
Fortunately, such theories are easily generalized to other types
of graphs with independent edges,
such as Poisson and configuration model random graphs.
Some specific results are given for
the Poisson random graph $\mathcal{G}(N, \avgdeg/N)$, and these
are the networks considered in Section~\ref{sec:tentmapf}.

\subsection{Stochastic variants}

The specific graph 
and node thresholds 
determine exactly which behaviors are possible.
These are chosen from some distribution of graphs, such as
$\mathcal{G}(N, \avgdeg/N)$, 
and some distribution of thresholds,
given by the joint density 
$P( \onthreshold , \offthreshold )$.
The specific graph and thresholds define a realization of the model
\citep[see][]{aldana2003a}.
When these are fixed for all time, 
we have, in principle, full knowledge of
the possible model dynamics. 
Given an initial condition $\xvec(0)$,
the dynamics $\xvec(t)$ 
are deterministic and known for all $t \geq 0$.

With the introduction of noise, the system is no longer
eventually periodic. Fluctuations at the node level allow
a greater exploration of state space, and the behavior is 
comparable to that of the general class of discrete-time maps.
Roughly speaking, the mean field theory we develop in 
Section~\ref{sec:meanfield} becomes more accurate as we introduce
more stochasticity.

We introduce randomness in two parts of the model:
the network and/or the response functions.
Allowing for the network and responses to be either
fixed for all time or resampled each time step and 
taking all possible combinations yields
four different designs 
(see Table~\ref{tab:modeldesigns}).

\begin{table}
  \centering
  \begin{tabular}{l|c|c}
    & {\bf R}ewiring network & {\bf F}ixed network \\
    \hline
    {\bf P}robabilistic response & P-R & P-F \\
    {\bf D}eterministic response & D-R & D-F
  \end{tabular}
  \caption[The four different ways the model can be realized.]
  {The four different ways the model can be realized. These
    are the combinations of fixed or rewired networks and
    probabilistic or deterministic response functions.
    In the thermodynamic limit of the fully stochastic version (P-R), 
    where the graph and response functions
    change every time step, the mean field theory 
    is exact (see Sec.~\ref{sec:meanfield}).}
  \label{tab:modeldesigns}
\end{table}

\subsubsection{Rewired graphs}
\label{sec:stochnet}

First, the network itself can change every time step.
This is the rewiring (R), as opposed to fixed (F),
network case.
For example, 
we could draw a new graph from $\mathcal{G}(N, \avgdeg/N)$
every time step. 
This amounts to rewiring the links while keeping
the degree distribution fixed, 
and it is alternately known as a mean 
field, ``annealed'', or random mixing variant of the 
fixed or ``quenched'' model
\citep{aldana2003a}.
% Statistical physicists also use the words
% ``annealed'' (for stochastic) and
% ``quenched'' (for fixed)
% \citep[see][for an explanation and a short history of the terms]{aldana2003a}.  
\subsubsection{Probabilistic responses}
\label{sec:stochfunc}

Second, the response functions can change every time step.
This is the probabilistic (P),
as opposed to the 
deterministic (D), response function case.
Again, we will need a well-defined distribution 
$P( \onthreshold , \offthreshold )$
for the thresholds.
This amounts to having a single
response function, the expected response function
\begin{equation}
  \label{eq:fstoch}
  f (\phi) = \int d\onthreshold \int d\offthreshold \; 
  P( \onthreshold , \offthreshold ) 
  f(\phi; \onthreshold, \offthreshold) .
\end{equation}
We call $f: [0,1] \to [0,1]$ the {\it probabilistic response function}.
Its interpretation is the following.
For an updating node with a fraction $\phi$ of
active neighbors at the current time step,
then, at the next time step, the node assumes the 
state 1 with probability $f(\phi)$ and the 
state 0 with probability $1-f(\phi)$.

\subsubsection{The concept of ``temperature'' in the system}

In this thesis, the network and response functions are either
fixed for all time or resampled every time step. 
One could tune smoothly between the two extremes by
introducing rates at which these reconfigurations occur.
These rates are inversely related to quantities that
behave like temperature 
(one for the network and another for the response functions).
For the fixed case,
the temperature is zero, 
and there are no fluctuations,
while in the stochastic case, 
the temperature is very large or infinite, 
and fluctuations occur every time step.

\subsection{Synchronicity of the update}

Finally, we introduce a parameter $\alpha$ for the probability that
a given node updates. When $\alpha = 1$, all nodes update every
time step, and the update rule is said to be synchronous.
When $\alpha = 1/N$, only one node is expected to update with each
time step, and the update rule is said to be effectively asynchronous.
This is equivalent to a randomly ordered sequential update.
For intermediate values, $\alpha$ is the expected fraction of
nodes which update each time step.

\section{Fixed networks}

\label{sec:fixedanalysis}

Take the case where the response functions and
graph are fixed (D-F), but the update may be synchronous or asynchronous.
Let $x_i(t)$ 
be the probability that node $i$ is in state 1 at time $t$,
and let 
$f_i (\phi) = f_i(\phi; \onthresholdi, \offthresholdi)$.
The dynamics follow the master equation
\begin{equation}
  \label{eq:quenchedmap}
  x_i (t+1) = 
  \alpha
  f_i \left( 
    \frac{ \sum_{j=0}^N A_{ij} x_j (t)}{\sum_{j=0}^N A_{ij}}
  \right)
  + (1-\alpha) x_i(t),
\end{equation}
which can be written in matrix-vector notation as 
\begin{equation}
  \label{eq:quenchedmapmatrix}
  \xvec (t+1) = \alpha \mathbf{f} \left( T \xvec(t) \right) 
  + (1-\alpha) \xvec(t).
\end{equation}
Here $T = D^{-1}A$ is sometimes called the transition probability
matrix (in the context of a random walker), 
$D$ is the diagonal degree matrix, and 
$\mathbf{f} = (f_i)$\footnote{
  Eqns.~\eqref{eq:quenchedmap} and \eqref{eq:quenchedmapmatrix} are not
  entirely correct
  when there are isolated nodes. In that case,
  $k_i = \sum_j A_{ij} = 0$ for certain $i$,
  thus the denominator in \eqref{eq:quenchedmap} is zero
  and $D^{-1}$ undefined.
  If the initial network contains isolated nodes, 
  we set all entries in the corresponding rows of $T$ to zero.
}.
Note that if $\alpha = 1$ we recover the fully deterministic
response function dynamics, and 
$x_i(t) = 0$ or 1 for all $t$.

% For the rest of this section, we will focus on
% the case of quenched Erd\H{o}s-R\'enyi random graphs.
% These are also called the $G(n,p)$ model.
% Here, the adjacency matrix is an $N \times N$ matrix
% with Bernoulli random variables for the off-diagonal entries.
% In the case of undirected graphs, $A$ is constrained to be symmetric.
% Specifically, $A_{ij} \in \zeroone$ for all $i, j$,
% $A_{ij} = 1$ with probability $p$ when $i \neq j$, and
% $A_{ij} = 0$ with probability 1 for $i = j$. 
% We can let $z = p N$ be the expected 
% degree for a node and parametrize the network by $z$ instead of $p$.

\subsection{Asynchronous limit}

Here, we show that when $\alpha \approx 1/N$, 
time is effectively continuous and the dynamics can be described by an 
ordinary differential equation. This is similar to the analysis of 
\citet{gleeson2008a}.
Consider Eqn.~\ref{eq:quenchedmapmatrix}. Subtracting $\xvec(t)$ from
both sides and setting $\Delta \xvec(t) = \xvec(t+1)-\xvec(t) $ and 
$\Delta t = 1$
yields
\begin{equation}
  \frac{\Delta \xvec(t)}{\Delta t} = 
  \alpha \left( \mathbf{f}( T \xvec(t) ) - \xvec(t) \right).
\end{equation}
Since $\alpha$ is assumed small, the right hand side is small, and 
thus $\Delta \xvec(t)$ is also small.
Making the continuum approximation 
$d\xvec(t)/dt \approx \Delta \xvec(t)/\Delta t$
yields the differential equation
\begin{equation}
  \label{eq:diffeq}
  \frac{d \xvec}{d t} = 
  \alpha \left( \mathbf{f}( T \xvec ) - \xvec \right).
\end{equation}
The parameter $\alpha$ sets the time scale for the system.
From their form, 
similar asynchronous, continuous time limits
apply to the dynamical equations in the densely
connected case, 
Eqn.~\eqref{eq:densemap},
and in the mean field
theory, 
Eqns.~\eqref{eq:edgemap} and~\eqref{eq:nodemap}.

\subsection{Dense network limit for Poisson random graphs}

\label{sec:densepoisson}

The following result is particular to Poisson random graphs, but
similar results are possible for other random graphs with 
dense limits.
The normalized Laplacian matrix is defined as
$\mathcal{L} \equiv I - D^{-1/2} A D^{-1/2}$,
where $I$ is the identity \citep{west2001a}.
So $T = D^{-1/2} (I- \mathcal{L}) D^{1/2}$. 
% We see that $T$ is similar to $I - \mathcal{L}$, and the associated
% change of basis is a rescaling of axis $i$ by $\sqrt{k_i}$.
By \cite{oliveira2009a}, 
when $\avgdeg$ is $\Omega(\log{N})$
there exists a typical Laplacian matrix
$\mathcal{L}^\text{typ} 
=  I_N - \mathbf{1}_N \mathbf{1}_N^\dagger /N$
[we let $\mathbf{1}_N$ denote the length-$N$ vector of ones
and ${(\cdot)}^\dagger$ the matrix transpose]
such that the actual 
$\mathcal{L} \approx \mathcal{L}^\text{typ}$
in the induced 2-norm (spectral norm) with high probability.
In this limit, if we assume uniform degrees
$k_i = \avgdeg$ for all $i \in V$,
then
$T \approx T^\text{typ} = \mathbf{1}_N \mathbf{1}_N^\dagger / N$.
So $T$ effectively averages the node states:
$T \xvec(t) \approx T^\text{typ} \xvec(t) = 
\sum_{i=1}^N x_i (t)/N \equiv \phi(t)$.
Without a subscript, $\phi(t)$  denotes
the active fraction of the network at time $t$. 
We make the above approximation in 
Eqn.~\ref{eq:quenchedmapmatrix}
and average that equation over all nodes, finding
\begin{equation}
  \phi(t+1) = \alpha  f (\phi(t)) + (1-\alpha) \phi(t) 
  \equiv \Phi(\phi(t); \alpha, f),
  \label{eq:densemap}
\end{equation}
where we have assumed that $N$ is large and
the average of nodes' individual response functions 
$\sum_{i=1}^N f_i/N$
converges in a suitable sense to the stochastic
response function $f$, Eqn.~\eqref{eq:fstoch}.
This amounts to assuming a law of large numbers for the response functions, 
i.e., that the sample average converges to the expected function.
Note that $\alpha$ tunes between the probabilistic response
function  $\Phi(\phi;1) = f(\phi)$
and the 45$^{\circ}$ line $\Phi(\phi; 0) = \phi$.
Also, the fixed points of $\Phi$ are fixed points of
$f$, but their stability will depend on $\alpha$.

When the network is dense, it ceases to affect the dynamics, since
each node sees a large number of other nodes. 
Thus the network is effectively the complete graph.
In this way we recover
the map models of \citet{granovetter1986a}.

\section{Mean field theory}

\label{sec:meanfield}

In physics, making a ``mean field'' calculation refers to replacing
the complicated interactions among many particles
by a single interaction with some effective external field.
There are analogous techniques for understanding networks
dynamics. 
Instead of considering the $|E|$ interactions
among the $N$ nodes, network mean field theories derive
self-consistent expressions for the overall behavior of the network,
after averaging over large sets of nodes.
These have been fruitful in the study of random Boolean
networks \citep{derrida1986a} and can be surprisingly 
effective when networks are non-random \citep{melnik2011a}.

We derive a mean field theory, in the thermodynamic limit,
for the dynamics of the on-off threshold model
by blocking nodes according to their degree class. 
This is equivalent to nodes retaining their degree but rewiring 
edges every time step.
The model is then part of the well-known class of random mixing
models with non-uniform contact rates.
Probabilistic (P-R) and deterministic (D-R) response functions
result in equivalent behavior for these random mixing models.
The important state variables end up being the active density of stubs.
In an undirected network without degree-degree correlations, 
the state is described by a single variable $\rho(t)$. 
In the presence of correlations we must
introduce more variables, i.e.,\ $\rho_k(t), \rho_{k'}(t)$, \ldots, 
to deal with the relevant degree classes.

\subsection{Undirected networks}

\label{sec:mfundir}

To derive the mean field equations in the simplest case ---
undirected, uncorrelated random graphs --- consider a degree $k$
node at time $t$. 
The probability that the node is in the 1 state at time $t+1$
given a density $\rho$ of active stubs is 
\begin{equation}
  \label{eq:Gmap}
  F_k ( \rho; f ) = \sum_{j=0}^{k} \binom{k}{j} \rho^j (1-\rho)^{k-j} 
  f(j/k),
\end{equation}
where each term in the sum counts the contributions 
from having 0, 1, \ldots, $k$ active neighbors.
Now, the probability of 
choosing a random stub which ends at a degree $k$ node is 
$q_k = k p_k/\avgdeg$
in an uncorrelated random network \citep{newman2003a}.
This is sometimes called 
the edge-degree distribution.
So if all of the nodes update synchronously, the active density
of stubs at $t+1$ will be
\begin{equation}
  \label{eq:gmap}
  g(\rho ; p_k, f)
  = \sum_{k=1}^{\infty} q_k F_k( \rho; f)
  = \sum_{k=1}^{\infty} \frac{k p_k}{\avgdeg} F_k( \rho; f).
\end{equation}
Finally, if each node only updates with probability $\alpha$,
we have the following map for the density of active stubs:
\begin{equation}
  \rho(t+1) = \alpha \, g \left( \rho(t); p_k, f \right)  + (1-\alpha) \rho(t)
  \equiv G(\rho(t); p_k, f, \alpha) .
  \label{eq:edgemap}
\end{equation}
By a similar argument, the active density of nodes is given by
\begin{equation}
  \label{eq:nodemap}
  \phi(t+1) = \alpha \, h(\rho(t); p_k, f ) + (1-\alpha) \phi(t)
  \equiv H( \rho(t), \phi(t); p_k, f, \alpha),
\end{equation}
where
\begin{equation}
  h( \rho; p_k, f) = \sum_{k=0}^{\infty} p_k F_k( \rho; f) .
\end{equation}
Note that the edge-oriented state variable $\rho$
contains all of the dynamically important information,
rather than the vertex-oriented variable $\phi$.

\subsection{Analysis of the map equation}
\label{sec:mfanalysis}

The function $F_k(\rho; f)$ is known in polynomial approximation theory
as the $k$th Bernstein polynomial (in the variable $\rho$) of $f$
\citep{phillips2003a, pena1999a}. 
These are approximating polynomials which have applications
in computer graphics due to their ``shape-preserving properties.''
The Bernstein operator $\mathbb{B}_k$ takes 
$f \mapsto F_k$. 
This is a linear, positive operator which preserves convexity
for all $k$ and exactly interpolates the endpoints $f(0)$ and $f(1)$. 
Immediate consequences include that
each $F_k$ is a smooth function and the $k$th derivatives
$F_k^{(k)}(x) \to f^{(k)}(x)$ where $f^{(k)}(x)$ exists.
For concave $f$ 
(such as the tent or logistic maps), 
we have concave $F_k$ for all $k$ and $F_k \nearrow f$ uniformly.
This convergence is typically slow. 
Importantly, $F_k \nearrow f$ implies that 
$g(\rho; p_k, f) \leq f$  
% and
% $G(\rho; p_k, f, \alpha) \leq \Phi(\rho; \alpha, f)$
for any degree distribution $p_k$.
%% \begin{enumerate}
%% \item linear $f(x) = ax + b$ has $F_k = f$ and $g = f$.
% \item 
% \item there is no Gibbs phenomenon when considering $F_k$ as 
%   an interpolant of $f$
% \end{enumerate}

In some cases,
the dynamics of the undirected mean field theory
given by $\rho(t+1)=G(\rho(t))$, 
Eqn.~\eqref{eq:edgemap},
are effectively those of the map $\Phi$, 
from the dense limit Eqn.~\eqref{eq:densemap}.
% This is   the shape-preserving properties
% of Bernstein polynomials.
We see that  $g$, Eqn.~\eqref{eq:gmap},
can be seen as the expectation of a sequence
of random functions $F_k$
under the edge-degree distribution $q_k$
(indeed, this is how it was derived).
From the convergence of the $F_k$'s,
we expect that
$g(\rho; p_k, f) \approx f(\rho)$ 
if the average degree $\avgdeg$ 
is ``large enough'' and 
the edge-degree distribution has a ``sharp enough'' peak about $\avgdeg$
(we will clarify this soon). 
Then as $\avgdeg \to \infty$,
the mean field coincides with the
dense network limit we found for Poisson random graphs, 
Eqn.~\ref{eq:densemap}.
Some thought leads to a sufficient condition for
this kind of convergence: the standard deviation $\sigma(\avgdeg)$
of the degree distribution must be $o(\avgdeg)$.
In Appendix~\ref{ax:limitthm} we prove this as Lemma~\ref{thm:1}.

In general, if the original degree distribution $p_k$ is
characterized by having mean $\avgdeg$, variance $\sigma^2$, and 
skewness $\gamma_1$, then the edge-degree distribution $q_k$
will have mean $\avgdeg + \sigma^2/\avgdeg$ and variance
$\sigma^2 [ 1 + \gamma_1 \sigma/\avgdeg - (\sigma/\avgdeg)^2]$.
Considering the behavior as $\avgdeg \to \infty$, we can conclude that
requiring $\sigma \in o(\avgdeg)$ and $\gamma_1 \in o(1)$ are sufficient
conditions on $p_k$ to apply Lemma~\ref{thm:1}.
Poisson degree distributions ($\sigma = \sqrt{\avgdeg}$ and
$\gamma_1 = \avgdeg^{-1/2}$) fit
these criteria.

\subsection{Generalized random networks}
\label{sec:mixednets}

In the most general kind of random networks, 
edges can be undirected or directed,
and we then denote
node degree by a vector
$\kvec = (\kbid, \kin, \kout)^\dagger$. 
The degree distribution is written as
$p_\kvec \equiv P(\kvec)$.
There may also
be correlations between node degrees.
Correlations of this type are encoded by the conditional 
probabilities
\begin{gather*}
  p^{(\bidmark)}_{\kvec,\kvec'} \equiv
  P( \kvec, \mathrm{undirected} | \kvec') \\
  p^{(\inmark)}_{\kvec,\kvec'} \equiv
  P( \kvec, \mathrm{incoming} | \kvec') \\
  p^{(\outmark)}_{\kvec,\kvec'} \equiv
  P( \kvec, \mathrm{outgoing} | \kvec'),
\end{gather*}
the probability that an edge starting at a degree $\kvec'$
node ends at a degree $\kvec$ node and is, respectively,
undirected, incoming, or outgoing relative to the destination degree
$\kvec$ node. 
We introduced this convention in a series of papers 
\citep{payne2011a, dodds2011a}.
These conditional probabilities can also be defined
in terms of the joint distributions of node types connected by
undirected and directed edges.
The mean field equations for the on-off threshold model are
closely related to the equations for the time evolution of
a contagion process \citep[][Eqns.~(13--15)]{payne2011a}.
We omit a detailed derivation, since it
is similar to that in Section~\ref{sec:mfundir}
\citep[see also][]{gleeson2007a, payne2011a}.
The result is a coupled system of equations for the
density of active stubs which now may depend on
node type ($\kvec$) and
edge type (undirected or directed):
\begin{align}
  \rho_{\kvec}^{(\bidmark)}(t+1) = &
  \alpha 
  \sum_{\kvec'} p^{(\bidmark)}_{\kvec,\kvec'} 
  \sum_{j_u=0}^{\kbid'} \sum_{j_i=0}^{\kin'}
  \binom{\kbid'}{j_u} \binom{\kin'}{j_i}  \nonumber \\
  & \times
  \left[ \rho_{\kvec'}^{(\bidmark)}(t) \right]^{j_u}
  \left[1 - \rho_{\kvec'}^{(\bidmark)}(t) \right]^{(\kbid'-j_u)} \nonumber \\
  & \times
  \left[ \rho_{\kvec'}^{(\inmark)}(t) \right]^{j_i}
  \left[1 - \rho_{\kvec'}^{(\inmark)}(t) \right]^{(\kin'-j_i)} 
  \: f \left( \frac{j_u + j_i}{\kbid' + \kin'} \right) \nonumber \\
  &
  + (1 - \alpha) \rho_{\kvec}^{(\bidmark)}(t) \label{eq:mfgenbid} \\
  % \end{align}
  % \begin{align}
  \rho_{\kvec}^{(\inmark)}(t+1) = &
  \alpha \sum_{\kvec'} p^{(\inmark)}_{\kvec,\kvec'} 
  \sum_{j_u=0}^{\kbid'} \sum_{j_i=0}^{\kin'}
  \binom{\kbid'}{j_u} \binom{\kin'}{j_i}  \nonumber \\
  & \times
  \left[ \rho_{\kvec'}^{(\bidmark)}(t) \right]^{j_u}
  \left[1 - \rho_{\kvec'}^{(\bidmark)}(t) \right]^{(\kbid'-j_u)} \nonumber \\
  & \times
  \left[ \rho_{\kvec'}^{(\inmark)}(t) \right]^{j_i}
  \left[1 - \rho_{\kvec'}^{(\inmark)}(t) \right]^{(\kin'-j_i)} 
  \: f \left( \frac{j_u + j_i}{\kbid' + \kin'} \right) \nonumber \\
  &
  + (1 - \alpha) \rho_{\kvec}^{(\inmark)}(t) \, . \label{eq:mfgendir}
\end{align}
The active fraction of nodes at a given time is given by:
\begin{align}
  \phi(t+1) = &
  \alpha \sum_{\kvec} p_{\kvec} 
  \sum_{j_u=0}^{\kbid} \sum_{j_i=0}^{\kin}
  \binom{\kbid}{j_u} \binom{\kin}{j_i}  \nonumber \\
  & \times
  \left[ \rho_{\kvec}^{(\bidmark)}(t) \right]^{j_u}
  \left[1 - \rho_{\kvec}^{(\bidmark)}(t) \right]^{(\kbid-j_u)} \nonumber \\
  & \times
  \left[ \rho_{\kvec}^{(\inmark)}(t) \right]^{j_i}
  \left[1 - \rho_{\kvec}^{(\inmark)}(t) \right]^{(\kin-j_i)} 
  \: f \left( \frac{j_u + j_i}{\kbid + \kin} \right) \nonumber \\
  &
  + (1 - \alpha) \phi(t) \, . \label{eq:mfgennodes}
\end{align}

\section{Poisson random graphs with tent map average response function}

\label{sec:tentmapf}

The results so far have been entirely general, in the sense
that the underlying network and thresholds are arbitrary.
Now we apply the general theory to the case of Poisson 
random graphs with a simple distribution of thresholds.

The networks we consider are 
Poisson random graphs from $\mathcal{G}(N, \avgdeg/N)$.
The thresholds $\onthreshold$ and $\offthreshold$ 
are now distributed uniformly
on $[0, 1/2)$ and $[1/2, 1)$, respectively.
This distribution results in the probabilistic response function
(see Figure~\ref{fig:tentmap}):
\begin{equation}
  \label{eq:tentmap}
  f(\phi) = \left\{
    \begin{array}{lrl}
      2\phi \hfill& \mathrm{if} & 0 \leq \phi < 1/2, \\
      2-2\phi \hfill & \mathrm{if} & 1/2 \leq \phi \leq 1 .
    \end{array}
  \right.
\end{equation}
The tent map is a well-known
chaotic map of the unit interval \citep{alligood1996a}.
We thus expect the on-off threshold model with this probabilistic
response function to exhibit similarly interesting
behavior.

\begin{figure}
  \centering
  \includegraphics[width=.5\linewidth]{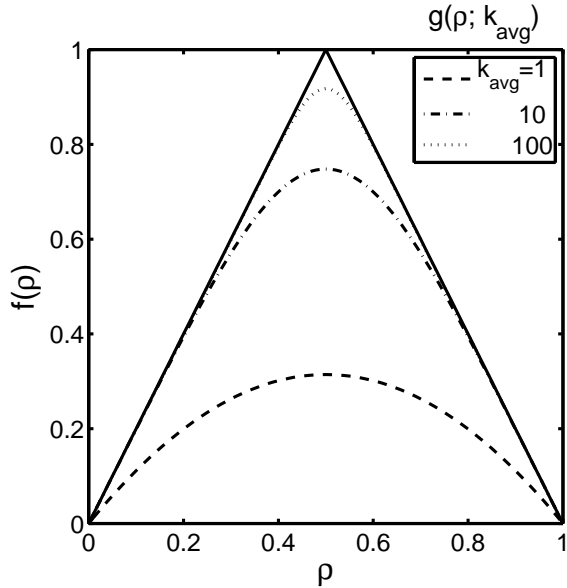}
  \caption[The tent map probabilistic response function $f(\rho)$,
  Eqn.~\eqref{eq:tentmap}.]{
    \label{fig:tentmap}
    The tent map probabilistic response function $f(\rho)$,
    Eqn.~\eqref{eq:tentmap}. 
    Note that we use the argument 
    $\rho$
    for comparison with the 
    edge maps 
    $g(\rho;\avgdeg) = g(\rho; p_k, f)$, 
    Eqn.~\eqref{eq:gmap}, shown for $\avgdeg = 1, 10, 100$.
    These $p_k$ are Poisson distributions
    with mean $\avgdeg$.
    As $\avgdeg$ increases, $g(\rho;\avgdeg)$ increases to $f(\rho)$.}
\end{figure}

\subsection{Analysis of the dense limit}
\label{sec:dense}

When the network is in the dense limit (Section~\ref{sec:densepoisson}), 
the dynamics follow
$\phi(t+1) = \Phi( \phi(t); \alpha )$, where
\begin{equation}
  \label{eq:densetent}
  \Phi(\phi; \alpha) = \alpha f(\phi) + (1-\alpha) \phi =
  \left\{
    \begin{array}{lrl}
      (1+\alpha) \phi & \mathrm{if} & 0 \leq \phi < 1/2, \\
      (1- 3\alpha)\phi + 2\alpha &  \mathrm{if} & 1/2 \leq \phi \leq 1 .
    \end{array}
  \right.
\end{equation}
Solving for the fixed points of $\Phi(\phi; \alpha)$, 
we find one at $\phi=0$ and
another at $\phi = 2/3$. When $\alpha < 2/3$, the 
nonzero fixed point is attracting for all initial conditions
except $\phi = 0$.
When $\alpha = 2/3$, $[1/2, 5/6]$ is an 
interval of period 2 centers. 
Any orbit will eventually land on one of these period 2 orbits.
When $\alpha > 2/3$, this interval
of period 2 orbits ceases to exist,
and more complicated behavior ensues. 
Figure~\ref{fig:bifurcdense} shows the bifurcation diagram 
for $\Phi(\phi; \alpha)$.
From the bifurcation diagram,
the orbit appears to cover dense subsets of the unit interval
when $\alpha > 2/3$. 
The bifurcation diagram appears like
that of the tent map 
(not shown; see \citealp{alligood1996a, dodds2012a}),
except that the branches to the right of the first bifurcation point
are separated here by the interval of period 2 orbits.

\begin{figure}
  \centering
  \includegraphics[width=0.8\textwidth]{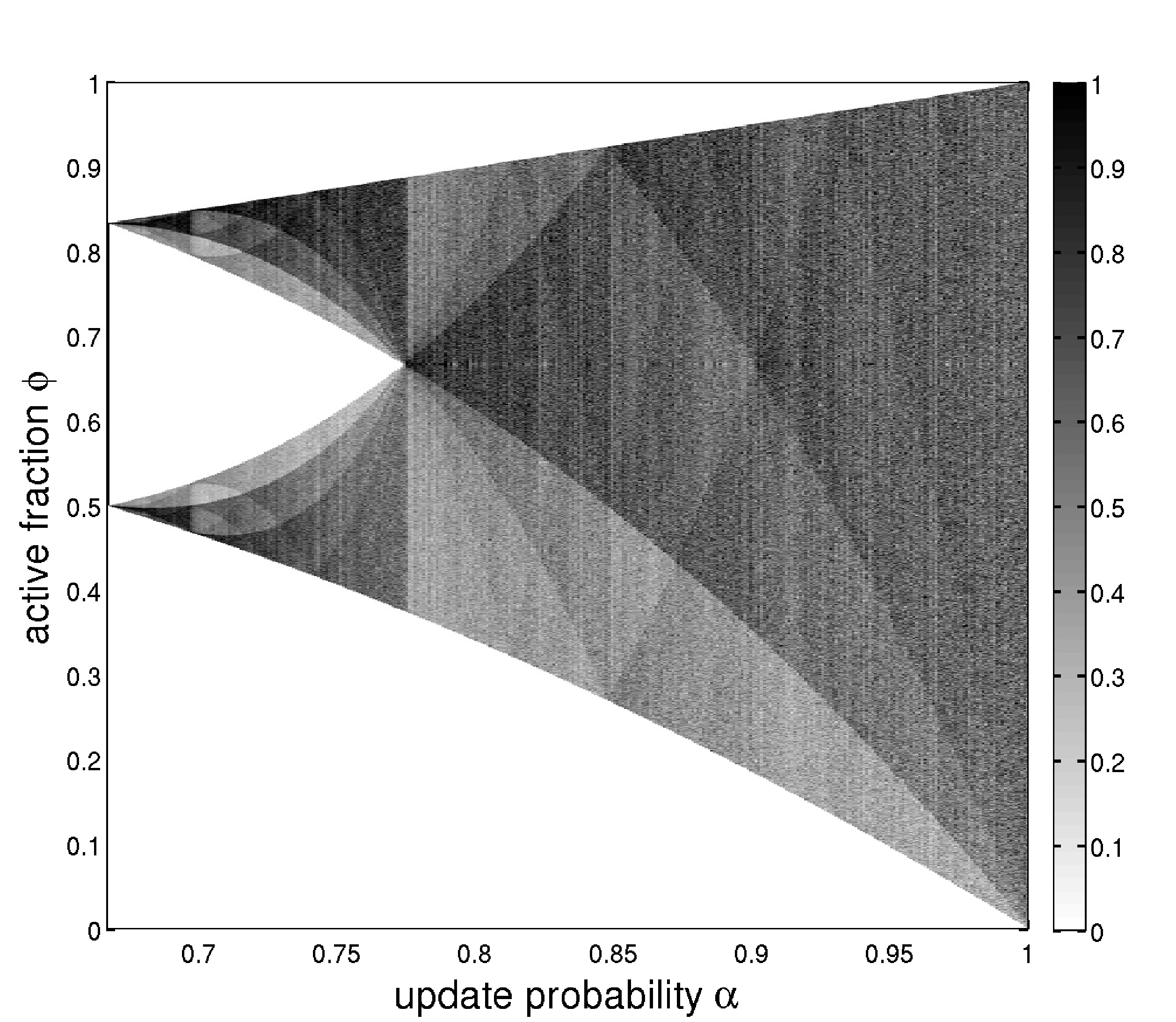}
  \caption[Bifurcation diagram for the dense map $\Phi(\phi;\alpha)$, 
  Eqn.~\eqref{eq:densetent}.]
  {Bifurcation diagram for the dense map $\Phi(\phi; \alpha)$,
    Eqn.~\eqref{eq:densetent}.
    This was generated by iterating the map at 1000
    $\alpha$ values between 0 and 1. 
    The iteration was carried out with
    3 random initial conditions for
    10000 time steps each, discarding the first 1000.
    The $\phi$-axis contains 1000 bins and the invariant density,
    shown by the grayscale value, is normalized
    by the maximum for each $\alpha$. With $\alpha < 2/3$ (not shown),
    all trajectories go to the fixed point at $\phi = 2/3$.
  }
  \label{fig:bifurcdense}
\end{figure}

\subsubsection{The effect of conformists, an aside}

Suppose some fraction $c$ of the population is made up of 
individuals without any off-threshold 
(alternatively, each of their off-thresholds $\offthreshold=1$). 
These individuals are conformist or ``purely pro-social''
in the sense that they are perfectly 
happy being part of the majority. 
For simplicity, assume $\alpha = 1$.
The map $\Phi(\phi; c) = 2 \phi$ for $0 \leq \phi < 1/2$
and $2 - 2(1-c)\phi$ for $1/2 \leq \phi \leq 1$.
If $c > 1/2$, then the equilibrium at 2/3 is stable. Pure conformists, then,
can have a stabilizing effect on the process. 
We expect a similar effect when the network is not dense.

\subsection{Mean field}

\subsubsection{Analysis}

In this specific example,
we can write the degree-dependent map 
$F_k(\rho ; f)$ 
in terms of incomplete regularized beta functions \citep{dlmf2012}. 
Since $f$ is understood to be the tent map, 
we will write $F_k(\rho; f) = F_k(\rho)$.
First, use the piecewise form of 
Eqn.~\eqref{eq:tentmap} to write
\begin{align*}
  F_k(\rho) &= 
  \sum_{j=0}^M \binom{k}{j} \rho^j (1-\rho)^{k-j} \left( \frac{2j}{k} \right) %\\
  % &\quad 
  + \sum_{j=M+1}^{k} \binom{k}{j} \rho^j (1-\rho)^{k-j} 
  \left( 2-\frac{2j}{k} \right) \\
  &= 2 - 2 \rho 
  - 2 \sum_{j=0}^M \binom{k}{j} \rho^j (1-\rho)^{k-j} %\\
  % &\quad 
  + \left( \frac{4}{k} \right)
  \sum_{j=0}^M \binom{k}{j} \rho^j (1-\rho)^{k-j} j .
\end{align*}
We have let $M=\lfloor k/2 \rfloor$ for clarity 
($\lfloor \cdot \rfloor$ and
$\lceil \cdot \rceil$ are the floor and
ceiling functions) and
used the fact that the binomial distribution
$\binom{k}{j} \rho^j(1-\rho)^{k-j}$
sums to one and has mean $k \rho$.
For $n \leq M$, we have the identity
\begin{align}
  \label{eq:factorialmoments}
  \sum_{j=0}^{M} (j)_n \binom{k}{j} \rho^j (1-\rho)^{k-j} = %&= \nonumber \\
  % \sum_{j=n}^{M} (k)_n \binom{k-n}{j-n} \rho^j (1-\rho)^{k-j} &= \nonumber \\
  % \rho^n (k)_n \sum_{j=0}^{M-n} \binom{k-n}{j} \rho^j (1-\rho)^{k-n-j} &= \nonumber\\
  % & \quad 
  \rho^n (k)_n I_{1-\rho}( k-M, M-n+1)
\end{align}
where $I_{x}(a,b)$ is the regularized incomplete beta function and
$(k)_n = k (k-1) \cdots (k - (n-1))$ is the falling factorial
\citep{winkler1972a, dlmf2012}. 
This is an expression for the partial (up to $M$) 
$n$th factorial moment of the binomial distribution with parameters
$k$ and $\rho$. Note that when $n=0$ we recover the well-known expression
for the binomial cumulative distribution function. 
So, 
\begin{subequations}
  \label{eq:Gmapbetafn}
  \begin{align}
    F_k (\rho) = & \, 2 \rho \, - %\nonumber \\
    % & 
    4 \rho I_{\rho} (M, k-M) + 2 I_{\rho} (M+1, k-M)
    \label{eq:Gmapbetafn1} \\
    = & \, (2 - 2 \rho) \, - %\nonumber \\
    % & 
    ( 2 I_{1-\rho} (k-M, M+1) - 4 \rho I_{1-\rho}(k-M, M) ) . 
    \label{eq:Gmapbetafn2}
  \end{align}
\end{subequations}

When $0 \leq \rho \leq 1/2$, 
the form of Eqn.~\eqref{eq:Gmapbetafn1}
can be used to show directly that 
$F_k(\rho)$ is bounded above by the tent map $f(\rho)$,
which we already knew from the properties of the
Bernstein polynomials.
A similar approach works for the region $1/2 \leq \rho \leq 1$
using Eqn.~\eqref{eq:Gmapbetafn2}.
We find a weak bound for the rate at which $F_k (\rho)$ 
converges to $f(\rho)$.
For $\rho < 1/2$, using Eqn.~\eqref{eq:Gmapbetafn1},
\begin{align*}
  f(\rho) - F_k(\rho) &= 4 \rho I_{\rho} (M, k-M) - 2 I_{\rho} (M+1, k-M) \\
  & \leq 2 \left( I_{\rho} (M, k-M) - I_{\rho} (M+1, k-M) \right) \\
  & \leq 2 \frac{ \rho^M (1-\rho)^{k-M} }{M B(M, k-M)} \\
  & \leq \frac{ 2 (1/2)^k }{ M B(M, k-M) } \\
  & \leq \frac{ 4 \Gamma(k)}{k 2^k [\Gamma(k/2)]^2} 
  \qquad \text{(for even $k$)}\\
  & \leq 2 \pi^{-1/2} \frac{1}{k} \cdot \frac{\Gamma(k/2 + 1/2)}{ \Gamma(k/2)}
  = O( k^{-1/2} ),
\end{align*}
where we have used identities from \cite{dlmf2012}. 
We find the same $O(k^{-1/2})$ behavior
using Eqn.~\eqref{eq:Gmapbetafn2} in the region $\rho > 1/2$.

Finally, note that the active edge fraction 
$\rho(t) \approx \phi(t)$,
the active node fraction.
This is because Poisson random graphs are highly regular, with
$q_k = k p_k / \avgdeg = p_{k-1} \approx p_k$.
Thus the mean field dynamics for active edge density 
are effectively the same as for active node density.

\subsubsection{Numerical algorithm}

The map $g(\rho; p_k, f)$ 
is parametrized here by the network parameter $\avgdeg$,
since $p_k$ is fixed as a Poisson distribution with mean $\avgdeg$ 
and $f$ is the tent map, and
we write it as simply $g(\rho; \avgdeg)$. 
To evaluate $g(\rho; \avgdeg)$,
we compute $F_k(\rho)$ using Eqn.~\eqref{eq:Gmapbetafn}
and constrain the sum in 
Eqn.~\eqref{eq:gmap} to values of $k$ with
$\lfloor \avgdeg-3 \sqrt{\avgdeg} \rfloor \leq k \leq \lceil \avgdeg+3 \sqrt{\avgdeg} \rceil$.
This computes contributions to 
within three standard deviations of the average degree in
the graph, requiring only $O (\sqrt{\avgdeg})$ evaluations of 
Eqn.~\eqref{eq:Gmapbetafn}. 
The representation in  
Eqn.~\eqref{eq:Gmapbetafn} allows for 
quick numerical evaluation of $F_k( \rho )$ for any $k$,
which we performed in MATLAB
using the built-in routines for the incomplete beta function.

In Figure~\ref{fig:tentmap},  
we show
$g(\rho; \avgdeg)$ 
for 
$\avgdeg = 1, 10,$ and 100.
We confirm the conclusions of Section~\ref{sec:mfanalysis}:
$g(\rho; \avgdeg)$ is bounded above by $f(\rho)$, 
and 
$g(\rho; \avgdeg) \nearrow f(\rho)$
as $\avgdeg \to \infty$. Convergence is slowest at $\rho = 1/2$, and the
kink that the tent map has there has been smoothed out by the
effect of the Bernstein operator.

\subsection{Simulations}

We performed direct simulations of the on-off threshold model
for the D-F, P-F, and P-R designs, 
in the abbreviations of Table~\ref{tab:modeldesigns}.
Unless otherwise noted, $N = 10^4$.
For all of the bifurcation diagrams,
the first 3000 time steps were considered transient and discarded,
and the invariant density of $\rho$ 
was calculated from the following 1000 points.
For plotting purposes, the invariant density was normalized by its
maximum at those parameters. 
For example, in Figure~\ref{fig:bifurcdense} we plot
$P(\phi | \alpha) / \max_\phi P( \phi | \alpha)$ 
rather than the raw density 
$P(\phi | \alpha)$.

To compare the mean field theory to those simulations,
we numerically iterated the edge map
$\rho(t+1) = G(\rho(t); \avgdeg, \alpha)$
for different values of $\alpha$ and $\avgdeg$.
We then created bifurcation
diagrams of the possible behavior in the mean field
as was done for the simulations.

\subsection{Results}

\label{sec:results}

\begin{figure}
  \centering
  \includegraphics[width=\textwidth]{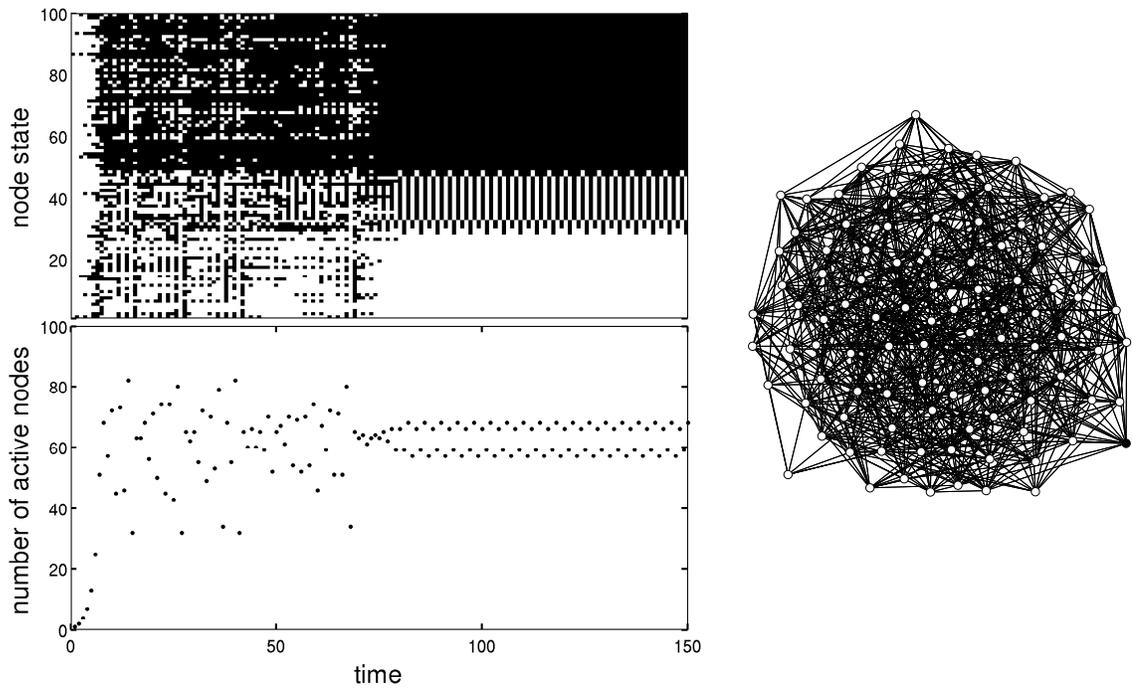}
  \caption[Deterministic (D-F) dynamics on a small graph.]
  {Deterministic (D-F) dynamics on
    a small graph. 
    Here, $N = 100$ and $\avgdeg = 17$. 
    On the left, 
    we plot the state evolution over time. 
    The upper plot shows
    individual node states (black = active) sorted by their
    eventual level of activity, 
    and the lower plot shows the total number 
    of active nodes. 
    We see that the contagion takes off, 
    followed by a transient period of unstable behavior
    until around time step 80, 
    when the system enters a macroperiod 4 orbit. 
    Note that individual nodes exhibit 
    different microperiods (see Sec.~\ref{sec:results}).
    On the right, 
    we show the network itself with the initial seed node
    in black in the lower right.}
  \label{fig:smalldet}
\end{figure}

We show results for the 
deterministic model on a small network in Figure~\ref{fig:smalldet}.
Here, $N =100$ and $\avgdeg=17$.
Starting from an initial active node at $t = 0$, the active population
grows monotonically over the next 6 time steps. 
From $t=6$ to $t \approx 80$,
the transient time, the active population fluctuates 
in a similar manner to the stochastic case. 
After $t \approx 80$, the state collapses into a period 4 orbit. 
We call the overall period of the system its ``macroperiod.''
Individual nodes may exhibit different ``microperiods.''
Note that the macroperiod is the lowest common multiple of the individual
nodes' microperiods.
In Figure~\ref{fig:smalldet}, we observe microperiods 1, 2, and 4
in the timeseries of individual node activity.
A majority of the nodes end up frozen 
in the on or off state,
with approximately $20 \%$ of the nodes exhibiting cyclical behavior
after collapse.
The focus of this thesis has been the analysis of the on-off threshold
model, and the D-F
case has not been as amenable to analysis as the 
stochastic cases.
A deeper examination through simulation
of the deterministic case will appear in \cite{dodds2012a}.

\begin{figure}
  \centering
  \includegraphics[width=0.7\textwidth]{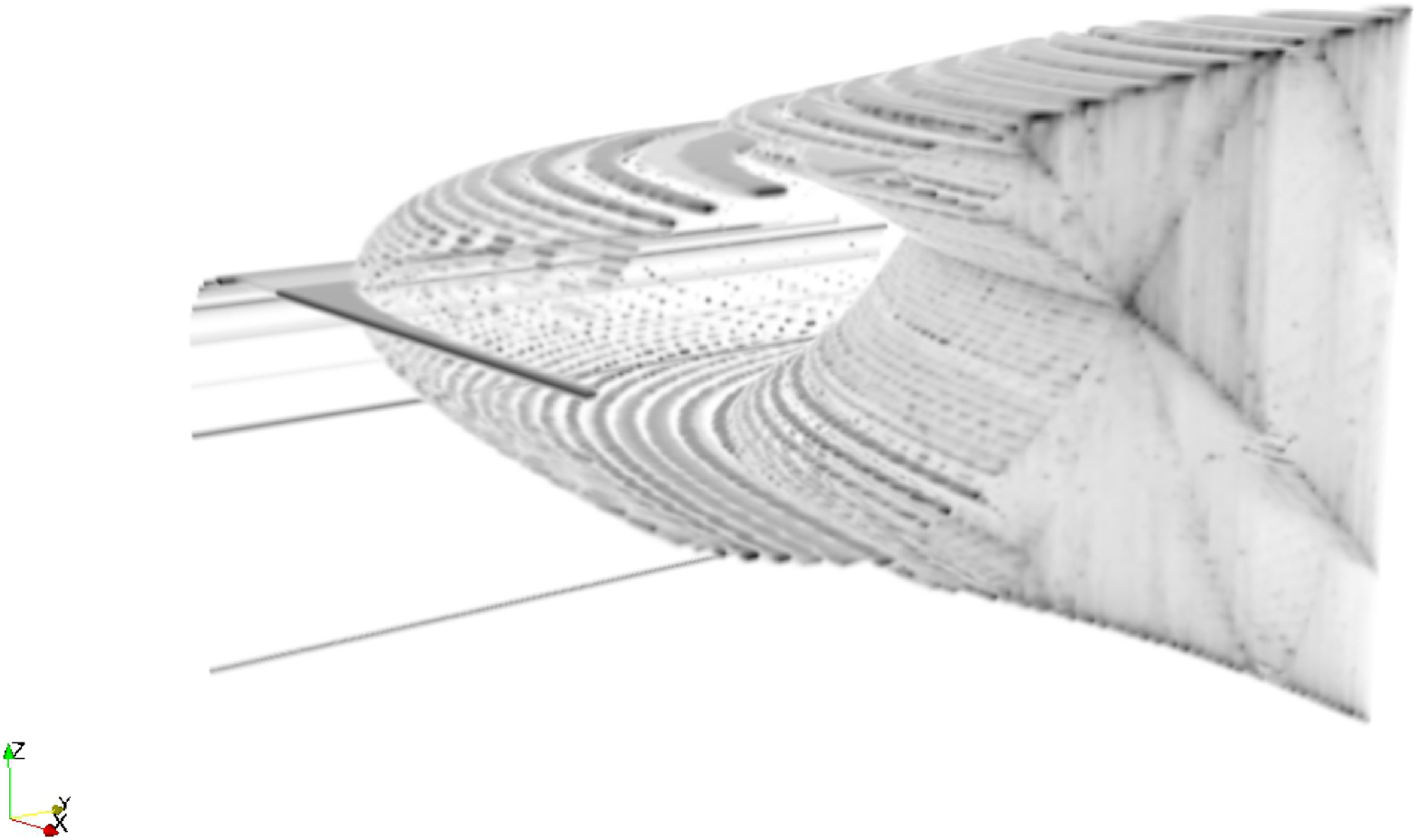}\\
  \includegraphics[width=0.7\textwidth]{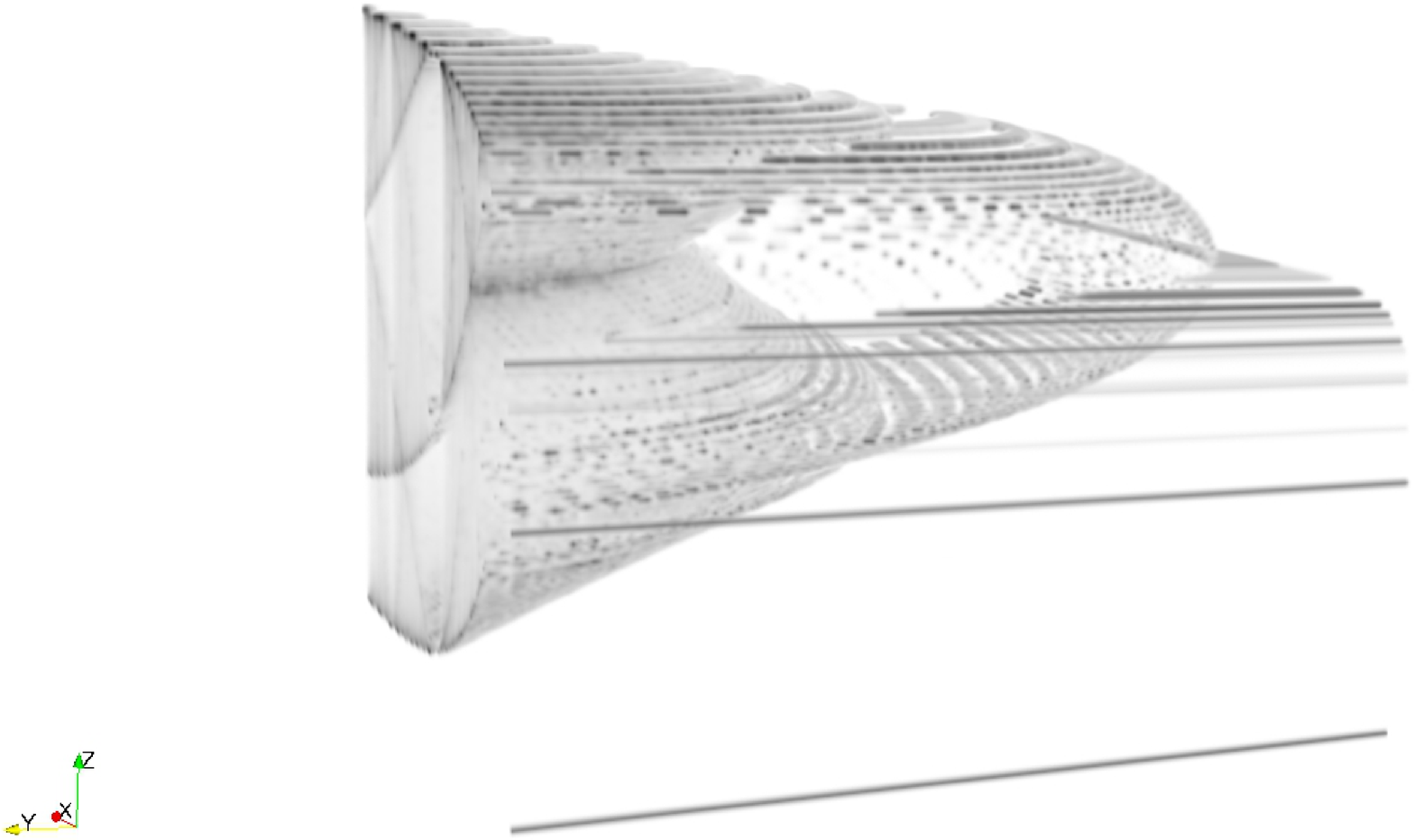}\\
  \includegraphics[width=0.7\textwidth]{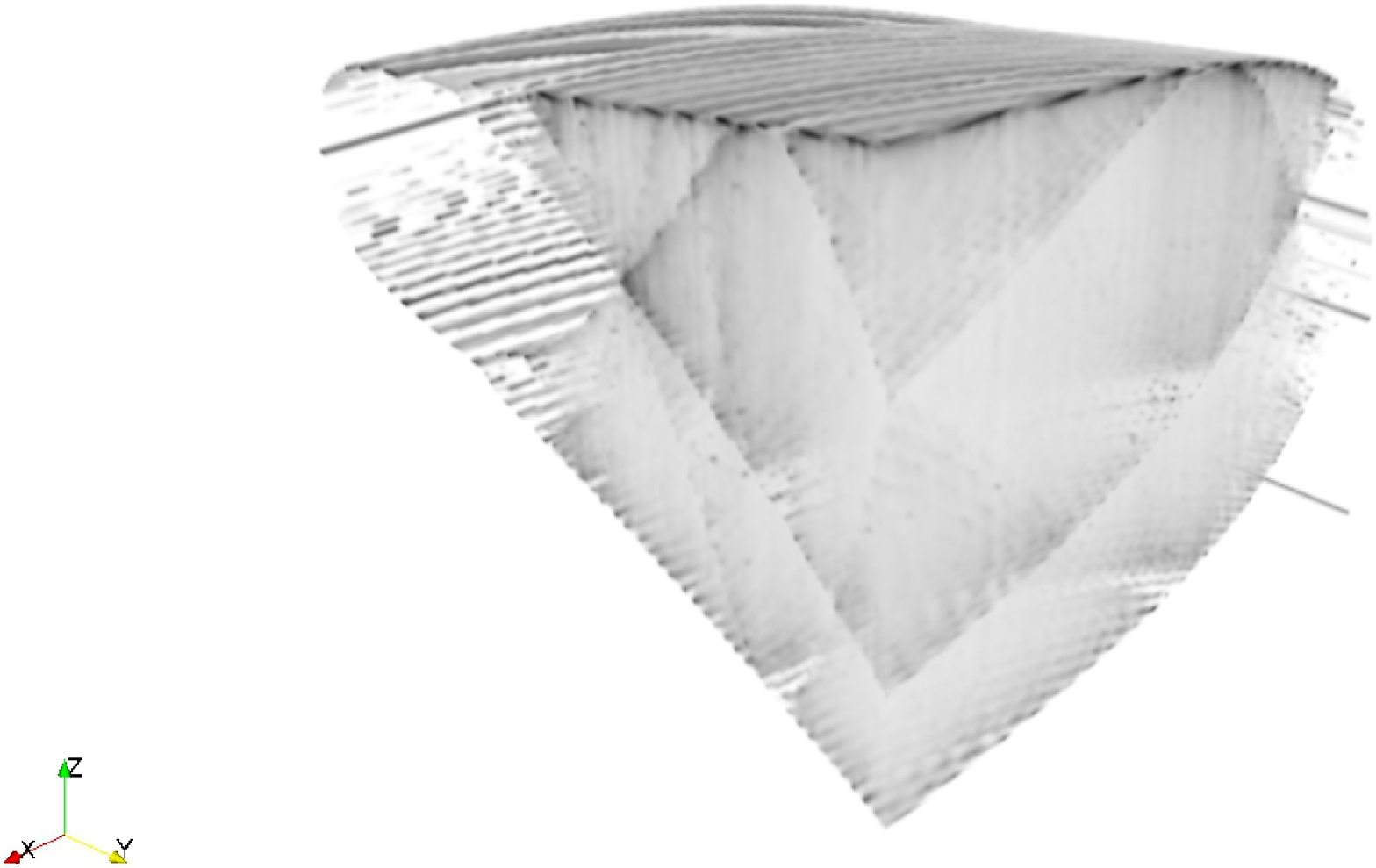}\\
  \caption[The 3-dimensional bifurcation diagram computed 
  from the mean field theory.]
  {The 3-dimensional bifurcation diagram computed 
    from the mean field theory.
    The axes 
    X = average degree $\avgdeg$, 
    Y = update probability $\alpha$, and 
    Z = active edge fraction $\rho$. 
    The discontinuities of the surface
    are due to the limited resolution of our simulations.
    See Figure~\ref{fig:bifurcslice_eqn} for the parameters used.
    This was plotted in Paraview.}
  \label{fig:3dbifurc}
\end{figure}

\begin{figure}
  \centering
  % \includegraphics[width=0.32\textwidth]
  % {\figdir /slices-2012-07-06/alpha0.666.eps} \hfill
  % \includegraphics[width=0.32\textwidth]
  % {\figdir /slices-2012-07-06/alpha0.9.eps} \hfill
  % \includegraphics[width=0.32\textwidth]
  % {\figdir /slices-2012-07-06/alpha1.eps} \\
  % \includegraphics[width=0.32\textwidth]
  % {\figdir /slices-2012-07-06/z10.eps} \hfill
  % \includegraphics[width=0.32\textwidth]
  % {\figdir /slices-2012-07-06/z50.eps} \hfill
  % \includegraphics[width=0.32\textwidth]
  % {\figdir /slices-2012-07-06/z100.eps}
  \includegraphics[width=\textwidth]
  {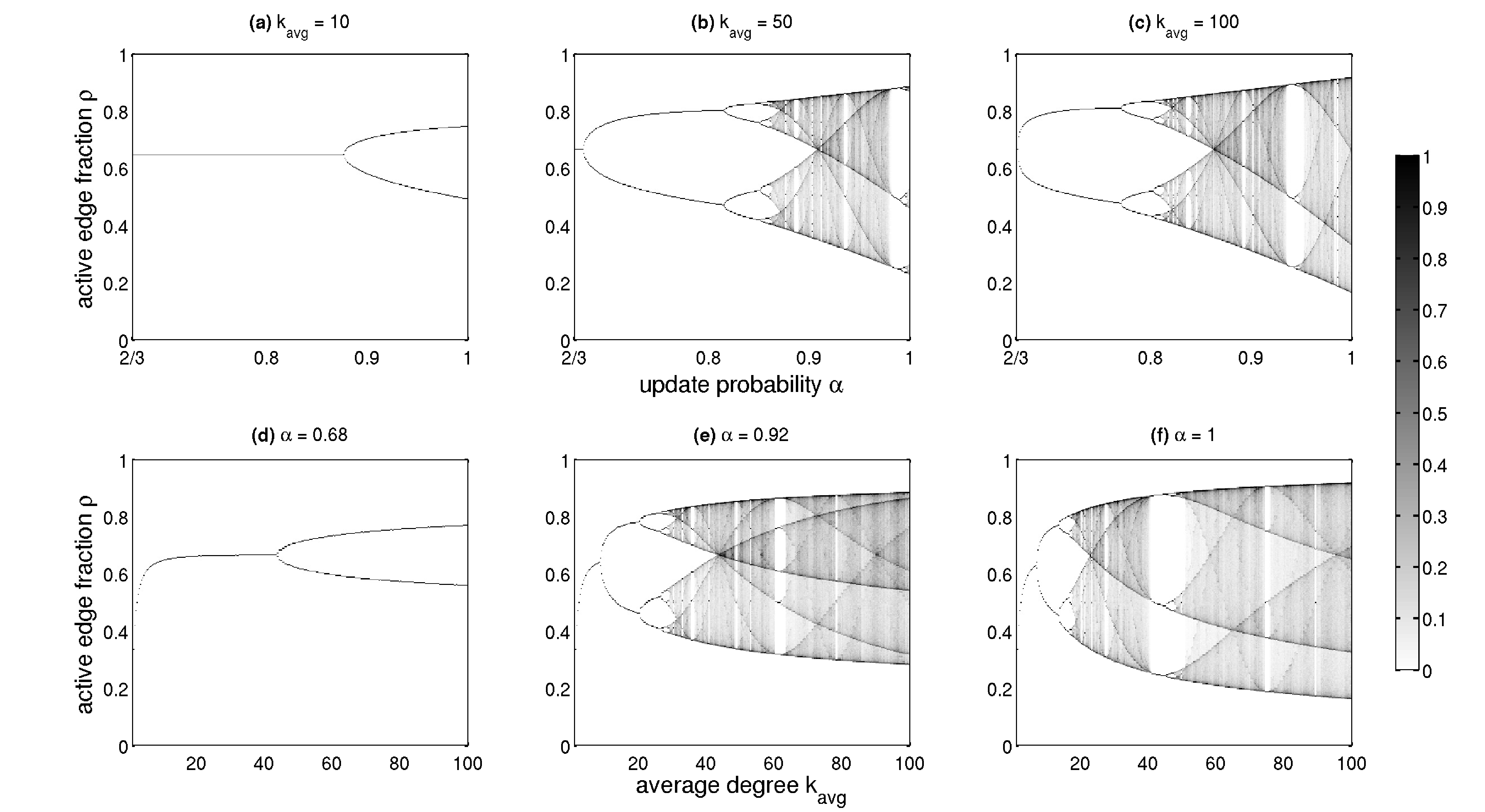}
  \caption[Mean field theory bifurcation diagram slices 
  for various fixed values of $\avgdeg$ and $\alpha$.]
  {Mean field theory bifurcation diagram slices 
    for various fixed values of $\avgdeg$ and $\alpha$.
    The top row (a--c) shows slices for fixed $\avgdeg$.
    As $\avgdeg \to \infty$, the $\avgdeg$-slice bifurcation diagram 
    asymptotically approaches the bifurcation diagram for the 
    dense map, Figure~\ref{fig:bifurcdense}. Note that the first
    bifurcation point, near 2/3, grows steeper with increasing $\avgdeg$.
    The bottom row (d--f) shows slices for fixed $\alpha$.
    The resolution of the simulations was
    $\alpha = 0.664, 0.665, \ldots, 1$, $\avgdeg=1, 1.33, \ldots, 100$,
    and $\rho$ bins were made for 1000 points between 0 and 1.
  }
  \label{fig:bifurcslice_eqn}
\end{figure}

\begin{figure}
  \centering
  % \includegraphics[width=0.32\textwidth]
  % {\figdir /slices-2012-07-06/alpha0.666.eps} \hfill
  % \includegraphics[width=0.32\textwidth]
  % {\figdir /slices-2012-07-06/alpha0.9.eps} \hfill
  % \includegraphics[width=0.32\textwidth]
  % {\figdir /slices-2012-07-06/alpha1.eps} \\
  % \includegraphics[width=0.32\textwidth]
  % {\figdir /slices-2012-07-06/z10.eps} \hfill
  % \includegraphics[width=0.32\textwidth]
  % {\figdir /slices-2012-07-06/z50.eps} \hfill
  % \includegraphics[width=0.32\textwidth]
  % {\figdir /slices-2012-07-06/z100.eps}
  \includegraphics[width=\textwidth]
  {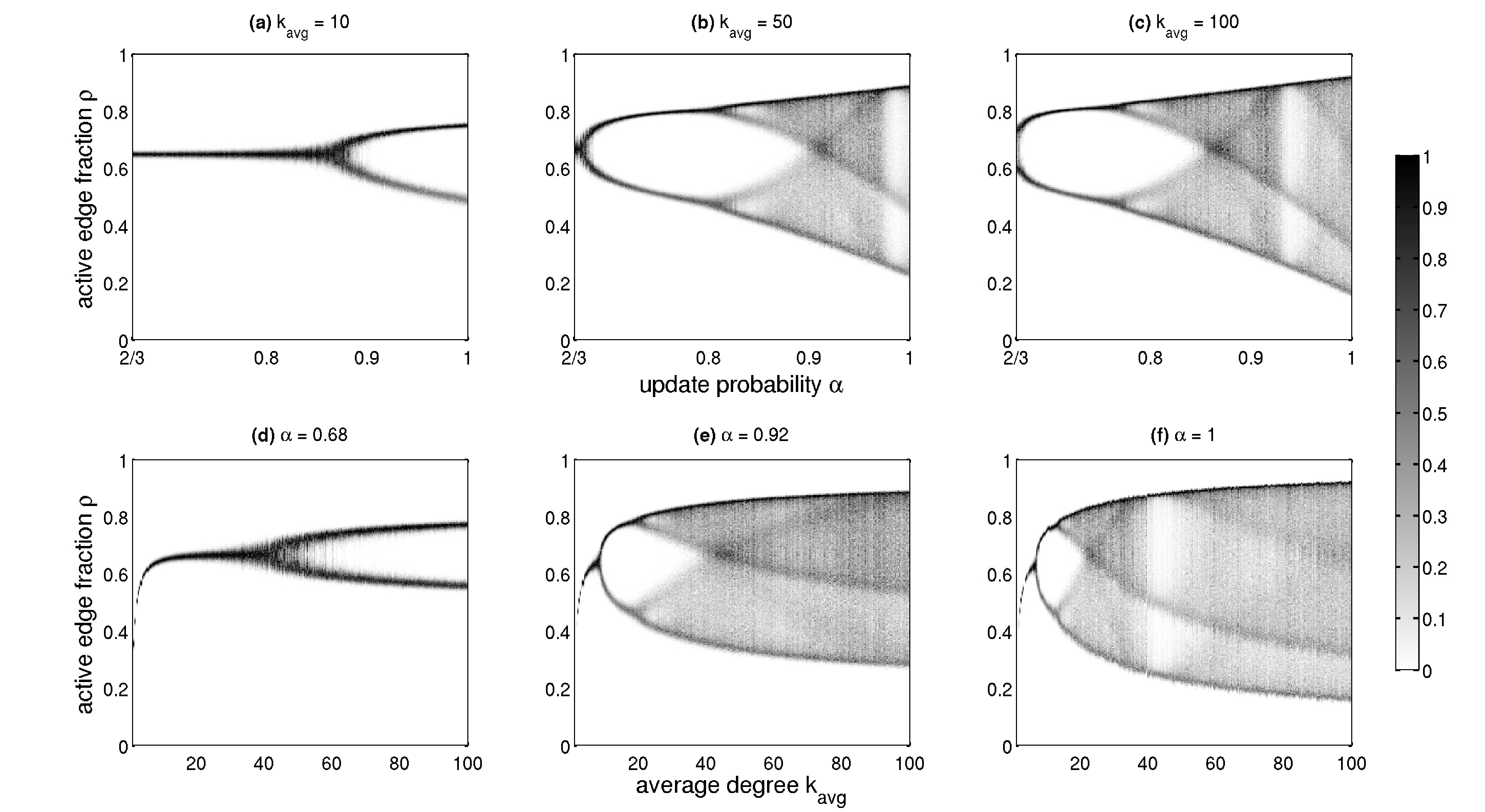}
  \caption[Bifurcation diagram from fully stochastic (P-R) simulations.]
  {
    Bifurcation diagram from fully stochastic (P-R) simulations.
    The same parameters were used as in
    Figure~\ref{fig:bifurcslice_eqn},
    which has the same structure only more blurred.
  }
  \label{fig:bifurcslice_sim}
\end{figure}

We explore the mean field dynamics by examining the limiting
behavior of the active edge fraction $\rho$
under the map $G(\rho; \avgdeg, \alpha)$.
We simulated the map dynamics
for a mesh of points in the $(\avgdeg, \alpha)$ plane.
We plotted the 3-dimensional 
(3-d; $N$-d denotes $N$-dimensional)
bifurcation structure of the mean field theory in Figure~\ref{fig:3dbifurc}.
We also made 2-d bifurcation plots for
fixed $\avgdeg$ and $\alpha$ slices through this volume, 
shown in
Figures~\ref{fig:bifurcslice_eqn} and~\ref{fig:bifurcslice_sim}.
In all cases, the invariant density of $\rho$
is normalized by its maximum for that $(\avgdeg, \alpha)$ pair
and indicated by the grayscale value.

The mean field map dynamics
exhibit period-doubling bifurcations
in both parameters $\avgdeg$ and $\alpha$. Visualizing the bifurcation
structure in 3-d (Figure~\ref{fig:3dbifurc}) shows 
interlacing period-doubling cascades in the two parameter dimensions.
These bifurcations are more clearly resolved when we take slices
of the volume for fixed parameter values.
The mean field theory (Figure~\ref{fig:bifurcslice_eqn})
closely matches the P-R simulations (Figure~\ref{fig:bifurcslice_sim}).
The first derivative 
${\partial G(\rho; \avgdeg, \alpha)}/{\partial \rho}
< {\partial \Phi(\rho; \alpha)}/{\partial \rho}$ 
for any finite $\avgdeg$, 
so the bifurcation point 
$\alpha = 2/3$ 
which we found for the dense map $\Phi$
is an upper bound for the first bifurcation point of $G$.
The actual location of the first bifurcation point depends on $\avgdeg$, 
but $\alpha =2/3$ becomes more accurate for higher $\avgdeg$ 
(it is an excellent approximation in Figures~\ref{fig:bifurcslice_eqn}c
and~\ref{fig:bifurcslice_sim}c, where $\avgdeg=100$). 
When $\alpha =1$, the first bifurcation point occurs at $\avgdeg \approx 7$.

The bifurcation diagram slices resemble each other and evidently
fall into the same universality class as the logistic map
\citep{feigenbaum1978a,feigenbaum1979a}.
This class contains all 1-d maps with a single,
locally-quadratic maximum. Due to the properties of the
Bernstein polynomials, $F_k(\rho; f)$ will universally
have such a quadratic maximum for any concave, continuous $f$ 
\citep{phillips2003a}. So this will also be 
true for $g(\rho; \avgdeg, f)$ with $\avgdeg$ finite, and we see that $\avgdeg$
partially determines the amplitude of that maximum in 
Figure~\ref{fig:tentmap}. 
Thus $\avgdeg$ acts as a bifurcation parameter. 
The parameter $\alpha$ tunes between 
$G(\rho; \avgdeg, 1) = g(\rho; \avgdeg, f)$ 
and 
$G(\rho; \avgdeg, 0) = \rho$, 
so it has a similar effect.
Note that the tent map $f$ and the dense limit map
$\Phi$ are kinked at their maxima, so their bifurcation
diagrams are qualitatively different from those of the mean field.
The network, by constraining the interactions among the population,
causes the mean field behavior to fall
into a different universality class than
the response function map.

\section{Conclusions}
\label{sec:conclusions}

We constructed the on-off threshold model as a simple model for
social contagion resulting from limited imitation. 
We see that including an 
aversion to total conformity results in more
complicated, even chaotic dynamics.
This model also allows us to study the effects 
of differing amounts of fixedness
in the social network and individual response functions,
and we developed a detailed mean field theory
which is exact for random mixing versions of the model.
Finally, we applied the theory to a specific case, 
where the network is Poisson and the response functions
average to the tent map. 

The model exhibits rich mathematical behavior.
The deterministic case, which we have barely touched on
here, merits further study. In particular, we would like to 
characterize the distribution of periodic sinks, how the 
collapse time scales with system size, and how similar
the transient dynamics are to the mean field dynamics.

Furthermore, the model should be tested on realistic networks.
These could include power law or small world random graphs,
or real social networks gleaned from data.
In a manner similar to \citet{melnik2011a}, 
one could evaluate the accuracy of the mean field theory 
for real networks.

Finally, the ultimate validation of this model would emerge from a
better understanding of social dynamics themselves. Characterization
of people's true response functions is therefore critical
\citep[some work has gone in this direction; see][]
{centola2010a, centola2011a, romero2011a, ugander2012a}. 
Comparison of model output to large data sets, such
as observational data from social media or online experiments,
is an area for further experimentation.
This might lead to more complicated context- and history-dependent models.
As we collect more data and refine experiments, 
the eventual goal of quantifiably predicting human behaviors, 
including fashions and trends,
seems achievable.
\end{doublespace}

\newpage
\cleardoublepage
\addcontentsline{toc}{section}{Bibliography}
\bibliographystyle{apalike}
\bibliography{library}

\cleardoublepage
%\appendix
\begin{appendices}
  \setcounter{figure}{1}
  \setcounter{table}{1}
  \section{Proof of Lemma 1}

\label{ax:limitthm}

\begin{mythm}
  \label{thm:1}
  For $k \geq 1$, let $f_k$ be continuous real-valued 
  functions on a compact domain $X$
  with $f_k \to f$ uniformly.
  Let $p_k$ be a probability mass function on $\mathbb{Z}^+$
  parametrized by its mean $\mu$ and with standard deviation
  $\sigma(\mu)$, assumed to be $o(\mu)$.
  Then,
  \begin{equation*}
    \label{eq:Top_lim}
    \lim_{\mu \to \infty} \left( \sum_{k=0}^\infty p_k f_k \right) = f .
  \end{equation*}
\end{mythm}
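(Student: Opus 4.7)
The plan is to prove this by splitting the infinite sum into a ``bulk'' part, where $k$ is comparable to $\mu$ and the functions $f_k$ are uniformly close to $f$, and a ``tail'' part, where $p_k$ has negligible total mass by Chebyshev's inequality. I expect to establish uniform convergence of $\sum_{k} p_k f_k$ to $f$ on $X$, which is the natural reading of the conclusion and the version actually needed for the application to $g(\rho;p_k,f)\to f(\rho)$.

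First I would record two preliminary facts. (i) Since each $f_k$ is continuous on the compact set $X$ and $f_k \to f$ uniformly, the limit $f$ is also continuous and bounded, and there is a uniform bound $M := \sup_k \|f_k\|_\infty + \|f\|_\infty < \infty$, so $\|f_k - f\|_\infty \le 2M$ for every $k$. (ii) Because $\sum_{k=0}^\infty p_k = 1$, we may write
\begin{equation*}
  \Bigl\| \sum_{k=0}^\infty p_k f_k - f \Bigr\|_\infty
  \;\le\; \sum_{k=0}^\infty p_k \, \|f_k - f\|_\infty .
\end{equation*}

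Next, given $\epsilon>0$, by uniform convergence choose $K=K(\epsilon)$ so that $\|f_k - f\|_\infty < \epsilon$ whenever $k \ge K$. Split the sum on the right at $K$:
\begin{equation*}
  \sum_{k=0}^\infty p_k \|f_k - f\|_\infty
  \;=\; \underbrace{\sum_{k<K} p_k \|f_k - f\|_\infty}_{\text{tail}}
  \;+\; \underbrace{\sum_{k\ge K} p_k \|f_k - f\|_\infty}_{\text{bulk}} .
\end{equation*}
The bulk contribution is at most $\epsilon \sum_{k\ge K} p_k \le \epsilon$. For the tail, bound $\|f_k - f\|_\infty \le 2M$ and estimate the total probability by Chebyshev: for $\mu > K$,
\begin{equation*}
  \sum_{k<K} p_k \;\le\; \Pr(|k-\mu| \ge \mu - K)
  \;\le\; \frac{\sigma(\mu)^2}{(\mu-K)^2}.
\end{equation*}
Since $\sigma(\mu)=o(\mu)$ and $K$ is fixed, the right-hand side tends to $0$ as $\mu \to \infty$, so the tail contribution is at most $2M\cdot o(1)$.

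Combining, $\limsup_{\mu\to\infty} \bigl\| \sum_k p_k f_k - f \bigr\|_\infty \le \epsilon$, and letting $\epsilon \downarrow 0$ gives the claim. The only mildly delicate step is justifying the uniform bound $M<\infty$; the rest is a routine $\epsilon$-$K$-Chebyshev argument. The essential structural point, which I would emphasize, is that $\sigma(\mu)=o(\mu)$ forces the distribution $p_k$ to concentrate on $k \gg K$ eventually for any fixed $K$, which is exactly what allows one to swap the limit $\mu\to\infty$ with the $k\to\infty$ limit already built into the assumption $f_k\to f$.
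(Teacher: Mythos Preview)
Your argument is correct and follows the same overall strategy as the paper's proof: split the sum into a low-$k$ piece controlled by Chebyshev's inequality and a high-$k$ piece controlled by the uniform convergence $f_k \to f$, using a uniform sup-norm bound on the $f_k$ (the paper's ``without loss of generality $|f_k|\le 1$'' is exactly your observation that $M<\infty$). The only real difference is the choice of cutoff. The paper takes a $\mu$-dependent threshold $K = \lfloor \mu - \mu^a \rfloor$ for some $0 \le a < 1$, so that $K \to \infty$ with $\mu$, and then asserts that the Chebyshev bound $(\sigma/\mu^a)^2$ vanishes. You instead fix $K = K(\epsilon)$ once and for all from the uniform convergence and let $\mu \to \infty$ afterward, so Chebyshev gives $\sigma(\mu)^2/(\mu-K)^2 \to 0$ directly from $\sigma = o(\mu)$. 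Your version is the cleaner of the two: it avoids having to produce an exponent $a$, and in fact the paper's claim that $(\sigma/\mu^a)^2 \to 0$ for \emph{some} $a<1$ does not literally follow from $\sigma(\mu) = o(\mu)$ alone (e.g.\ $\sigma = \mu/\log\mu$), so your fixed-$K$ decomposition is also more robust under the stated hypothesis.
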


\begin{proof}
  Suppose $0 \leq a < 1$ and let $K = \lfloor \mu-\mu^a \rfloor$. Then,
  \begin{equation}
    g = \sum_{k=0}^{\infty} p_k f_k 
    = \sum_{k=0}^{K} p_k f_k + \sum_{k=K+1}^\infty p_k f_k.
    \label{eq:splitsum}
  \end{equation}
  Since $f_k \to f$ uniformly as $k \to \infty$,
  for any $\epsilon > 0$ we can choose $\mu$
  large enough that
  \begin{equation}
    \label{eq:epsilon}
    |f_k(x) - f(x)| < \epsilon
  \end{equation} 
  for all $k > K$ and all $x \in X$. 
  Without loss of generality, assume that $|f_k| \leq 1$ for all $k$. Then,
  \begin{equation*}
    |g-f| \leq \left( \frac{\sigma}{\mu^a} \right)^2 + \epsilon .
  \end{equation*}
  The $\sigma / \mu^a$ term is a consequence of the 
  Chebyshev inequality \citep{bollobas2001a}
  applied to the first sum in \eqref{eq:splitsum}.
  Since $\sigma$ grows sublinearly in $\mu$, this
  term vanishes for some $0 \leq a <1$
  when we take the limit $\mu \to \infty$.
  The $\epsilon$ term comes from using \eqref{eq:epsilon}
  in the second sum in \eqref{eq:splitsum}, 
  and it can be made arbitrarily small.
\end{proof}

\section{Online material}

To better explore the 3-d mean field bifurcation structure,
we created movies of the the $\avgdeg$ and $\alpha$ slices
as the parameters are dialed. The videos are available
at \\
\url{http://www.uvm.edu/\~kharris/on-off-threshold/bifurc\_movies.zip}.\\
Also, a VTK file with the 3-d bifurcation data, 
viewable in Paraview is at\\
\url{http://www.uvm.edu/\~kharris/on-off-threshold/volume_normalized.vtk.zip}.

Videos of the individual-node dynamics for small networks in 
the D-F and P-F cases are shown for some parameters
which produce interesting behavior. These are available at\\
\url{http://www.uvm.edu/\~kharris/on-off-threshold/graph\_movies.zip}.

The D-F, P-F, and P-R cases were implemented in Python.
The code is available at\\
\url{http://www.uvm.edu/\~kharris/on-off-threshold/code.zip}.
\end{appendices}

\end{document}